\def\ps@pprintTitle{%
  \let\@oddhead\@empty
  \let\@evenhead\@empty
  \let\@oddfoot\@empty
  \let\@evenfoot\@oddfoot
}
\newcommand\given[1][]{\:#1\lvert\:}
\newcommand{\treename}{\textsc{PM+}}
\newcommand{\nameofthescheme}{\textsc{PM+-\allowbreak{}Multilinear}}
\newcommand{\fullnameofthescheme}{\textsc{Pseudo+\allowbreak{}Mersenne-\allowbreak{}Multilinear}}
\newcommand{\treenamesingle}{\textsc{PM+32}}
\newcommand{\nameoftheschemesingle}{\textsc{PM+-\allowbreak{}Multilinear32}}
\newcommand{\treenamedbl}{\textsc{PM+64}}
\newcommand{\nameoftheschemedbl}{\textsc{PM+-\allowbreak{}Multilinear64}}
\newcommand\xor{\veebar}
\newtheorem{definition}{Definition}
\newtheorem{definition*}{Definition}
\newtheorem{remark*}{Remark}
\newtheorem{lemma}{Lemma}
\begin{document}

\runningheads{D. Ivanchykhin, S. Ignatchenko, D. Lemire}{Regular and almost universal hashing}


\title{Regular and almost universal hashing: an efficient implementation}

\author{D. Ivanchykhin\affil{1}, S. Ignatchenko\affil{1}, D. Lemire\affil{2}}

\address{\affilnum{1}OLogN Technologies AG, Triesen, Liechtenstein \break
\affilnum{2}LICEF Research Center, TELUQ, Montreal, QC, Canada}


\cgsn{Natural Sciences and Engineering Research Council of Canada}{261437}
\corraddr{Daniel Lemire, LICEF Research Center, TELUQ,
Universit\'e du Qu\'ebec,
5800 Saint-Denis,
Office 1105,
Montreal (Quebec),
H2S 3L5 Canada. Email: lemire@gmail.com}


%
%
%
%

\begin{abstract}
Random hashing can provide  guarantees regarding the performance of data
structures such as hash tables---even in an adversarial setting.
Many existing families of hash functions are universal:
given two data objects, the probability that they have the same
hash value is low given that we pick hash functions at random.
However, universality fails to ensure
that all hash functions are well behaved.
We might further require regularity: when picking
data objects at random they should have a low probability
of having the same hash value, for any fixed hash function.
We present the efficient implementation of a family of non-cryptographic hash functions
 (\treename{}) offering good running times, good memory usage as well as distinguishing theoretical guarantees:
almost universality
and component-wise regularity. On a variety
 of platforms, our implementations are
 comparable to the state of the art in
 performance. On recent Intel processors,
  \treename{} achieves a speed of
   4.7~bytes per cycle for 32-bit outputs and
3.3~bytes per cycle for 64-bit outputs. We
 review vectorization through SIMD
 instructions (e.g., AVX2)  and
 optimizations for superscalar execution.
\end{abstract}
\keywords{performance; measurement; random hashing, universal hashing, non-cryptographic hashing, avalanche effect}

\maketitle

\lstdefinestyle{customc}{%
  belowcaptionskip=1\baselineskip,
  breaklines=true,
  xleftmargin=\parindent,
  language=C,
  showstringspaces=false,
  basicstyle=\small\ttfamily,
  keywordstyle=\bfseries\color{green!40!black},
  commentstyle=\itshape\color{purple!40!black},
  identifierstyle=\bfseries\color{black},
  stringstyle=\color{orange},
   morekeywords={uint64_t,uint32_t,__m256i,__m128i},
}

\lstset{escapechar=@,style=customc}

\section{Introduction}

Hashing is ubiquitous in software. For example, most programming languages support hash tables, either directly, or via libraries. However,  while many computer science textbooks consider random hashing, most software libraries  use deterministic (i.e., non-random) hashing.

 A hash function maps data objects, such as strings,  to fixed-length values (e.g., 64-bit integers). We often expect data objects to be mapped
evenly over the possible hash values. Moreover,
we expect collisions to be unlikely:  there is a \emph{collision} when two
objects are mapped to the same hash value.
Hash tables can only be expected to offer
constant time query performance when collisions
are infrequent.

When hashing is deterministic, we pick one hash function  once and for all. It is even customary for this hash function to be common knowledge.
Moreover, the objects
being hashed are typically not random, they could even be provided by an adversary.
Hence, an adversary can cause
many collisions that could translate into a denial-of-service (DoS) attack~\cite{Crosby:2003:DSV:1251353.1251356,ocert2011003,Oorschot2006}.

In random hashing,  we regularly pick a new hash function at random from a family of hash functions.  With such random hashing, we can  bound the collision probability between two objects, even if the objects are chosen by an adversary. By using random hashing, programmers might produce more secure software and avoid DoS attacks. Maybe for this reason, major languages have adopted random hashing. Python (as of version~3.3), Ruby (as of version~1.9) and Perl (as of version~5.18) use random hashing by default~\cite{Orton2013}.
Unfortunately, 
these languages fail to offer a theoretical guarantee regarding collision probabilities.

A family of hash functions having a low collision probability given two objects chosen by an adversary may contain terrible hash functions: e.g., hash functions mapping all objects to the same value (see \S~\ref{sec:regularity}). In practice, such bad hash functions can be reused over long periods of time: e.g., a Java program expects the hash value of a given object to remain the same for the duration of the program. An adversary could detect that a bad hash function has been selected and launch a successful attack~\cite{Handschuh2008,Saarinen2012}. Thus we should ensure that all hash functions in a family can be safely used.
We believe that regularity might help in this regard: a function is regular if all hash values are equally likely given that we pick the data objects at random. Indeed, regularity implies that the probability that two objects picked at random have the same hash value is low. We generalize regularity to component-wise regularity by considering same-length strings that differ by one character. We want to minimize the probability that any two such strings have the same hash value.

In the following sections, we describe a practical approach toward generating non-cryptographic hash functions for arbitrary
objects that have low probabilities of collision when picking hash values at random (i.e., universality) and
low probabilities when picking data objects at random (i.e., regularity).

It is not
difficult to construct such families: e.g., the family $h(x) = a x \bmod \,p$ for $p$ prime and an integer $a$ picked randomly in $[1,p)$ is regular and almost universal over integers in $[0,p)$. However, our objective is to implement a practical solution in software that provides competitive speed. In particular, we wish to hash arbitrarily  long strings of bytes, not just integers in $[0,p)$, and in doing so,
we wish to make the best possible use of current processors. To achieve our goals, we  use affine functions over a suitable finite field to hash blocks of machine words. We choose the finite field so as to make the operations  efficient. We then use a tree construction to hash long strings. The resulting family of hash functions is called \treename{}. We establish universality and regularity properties.

We run performance experiments using a variety of platforms such as  Intel, AMD and ARM processors, with both Microsoft and GNU compilers.
On recent Intel processors, our proposal (\treename{}) hashes long strings at a rate of   4.7~bytes per cycle for 32-bit outputs 3.3~bytes per cycle for 64-bit outputs.
Generally, our functions match the speed of  state-of-the-art hash functions: they are as fast as  MurmurHash~\cite{smhasher} on shorter segments and comparable  to VHASH~\cite{dai2007vhash} on longer data segments. However, \treename{} has distinguishing theoretical guarantees: MurmurHash is not claimed to be universal and VHASH offers poor regularity.

We also present the optimization methods used to achieve the good performance of  \treename{} (see \S~\ref{ref:efficient}). For example, we optimize the computation of a scalar product in a finite field. Such optimizations might be useful for a variety of functions.

\section{Random hashing}

\begin{table}
\caption{\label{tab:notation} Notation }
\centering
\begin{tabular}{ll} \toprule
$h,f, g, f_i$  & hash functions\\
$\mathcal{H}, \mathcal{F}, \mathcal{G}$& families of hash function\\
$X,Y, Z$& sets of integer values\\
$x\in X$   & value $x$ in $X$\\
$|X|$& cardinality of the set $X$\\
$p$& prime number\\
$\kappa$    & parameter of the \treename{} family\\
$s$   & string\\

$s_i$   & value of the $i^{\mathrm{th}}$~character\\
$m$ & number of characters in a block\\
$n$ & number of bits\\
$L$ & number of levels\\
\bottomrule
\end{tabular}
\end{table}

Good hash functions are such that hash values are  random in some sense. To achieve randomness,
we pick a hash function $h:X \to Y$ at random in a family of
hash functions. (For our notation, see Table~\ref{tab:notation}.) For practical reasons, we assume  $Y$ to be an interval of integers starting at zero, e.g., $Y=[0,2^{32})$.

A family is  \emph{uniform} if
$P(h(x) = c) = 1/|Y| $ for any constant $c\in Y$ and any $x \in X$ where $|Y|$ is the cardinality of $Y$~\cite{dietzfelbinger1996universal}. Uniformity is a weak property: let $\mathcal{H}$ be the family of hash functions of the form $h(x)=c$ for some $c \in Y$, then $\mathcal{H}$ is uniform even though each hash function maps all values to the same constant $c$. If $P(h(x) = c) \leq \epsilon $ for all $x$ and all $c$, then we say that it is $\epsilon$-almost uniform.

A family  is
  \emph{universal}~\cite{carter1979universal,Cormen:2009:IAT:1614191} if
the probability of a collision is no larger than if the hash values were random: $P\left (h(x)=h(x')\right )\leq 1/|Y|$
for any $x,x' \in X$ such that $x \neq x'$.
It is \emph{$\epsilon$-almost universal}~\cite{188765} (also written $\epsilon$-AU) if the probability of a collision is bounded by $\epsilon<1$.
Informally, we say that a family has \emph{good universality} if it is $\epsilon$-almost universal for a small $\epsilon$. Universality does not imply uniformity. 

For example, consider Carter-Wegman polynomial hashing~\cite{carter1979universal}. It is given by the family of functions $h:Y^m\to Y$ of the form $h(s_1, s_2, \ldots, s_m) = \sum_{i=1}^m t^{n-i} s_i$ where the computation is executed over a finite field of cardinality $|Y|$. The value $t$ is picked in $Y$.  It is $(m-1)/|Y|$-almost universal but not uniform (even when $m=1$)~\cite{Lemire2012604}.

\subsection{$\Delta$-universality}

A family is  \emph{$\Delta$-universal}~($\Delta$U)~\cite{stinson1996connections} if
$P(h(x) = h(x') + c \bmod \, |Y|) \leq 1/|Y|$ for any constant $c$
and any $x,x' \in X$ such that $x \neq x'$. Moreover, it is  $\epsilon$-almost $\Delta$-universal ($\epsilon$-A$\Delta$U or $\epsilon$-ADU) if
$P(h(x) = h(x') + c  \bmod \,|Y|) \leq \epsilon$ for any constant $c$ and any $x,x' \in X$ such that $x \neq x'$.
$\Delta$-universality implies universality but not uniformity.

It is necessary sometimes to take an $L$-bit hash value and hash it down to $[0,m)$. It is common to simply apply a modulo operation to achieve the desired result. As long as the original hash family is $\Delta$-universal, the modulo operation is a sound approach  as the next lemma shows.

\begin{lemma}\label{lemma:deltaismarvellous} (Dai and Krovetz~\cite[Lemma~4]{dai2007vhash})
Given an $\epsilon$-almost $\Delta$-universal family $\mathcal{H}$ of hash functions  $h:X \to Y$, the family of hash functions
$\{ h(x) \bmod \, M \given  h \in \mathcal{H}\}$ from $X$ to $[0,M)$ is
$\left \lceil \frac{2|Y|-1}{M} \right \rceil \times \epsilon$-almost $\Delta$-universal. Moreover, if $M$ divides $|Y|$, then the result is an $\frac{|Y|}{M}  \times \epsilon$-almost $\Delta$-universal family.
\end{lemma}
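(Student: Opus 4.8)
The plan is to translate the collision condition for the reduced family into a statement about the difference $d = h(x) - h(x')$ of the underlying hash values, and then cover that event by events to which the $\epsilon$-almost $\Delta$-universal hypothesis on $\mathcal{H}$ applies directly. Writing $g = h \bmod M$, for any fixed $c \in [0,M)$ and any $x \neq x'$ the target event $g(x) \equiv g(x') + c \pmod{M}$ is equivalent to $h(x) - h(x') \equiv c \pmod{M}$, since reducing each of $h(x)$ and $h(x')$ modulo $M$ does not change their difference modulo $M$. Thus it suffices to bound $P\big(h(x) - h(x') \equiv c \pmod{M}\big)$, where the probability is over the random draw of $h \in \mathcal{H}$.

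For the general bound, I would note that $d = h(x) - h(x')$ always lies in the window $\{-(|Y|-1), \dots, |Y|-1\}$ of $2|Y|-1$ consecutive integers, because $h(x),h(x') \in [0,|Y|)$. Hence the event $\{d \equiv c \pmod{M}\}$ is the disjoint union of the events $\{d = v\}$ over the integers $v$ in this window with $v \equiv c \pmod{M}$, and there are at most $\lceil (2|Y|-1)/M\rceil$ such $v$. The key step is that $P(d = v) \le \epsilon$ for each $v$: setting $c_v = v \bmod |Y|$, the event $\{h(x) - h(x') = v\}$ is contained in $\{h(x) = h(x') + c_v \bmod |Y|\}$, whose probability is at most $\epsilon$ by $\epsilon$-A$\Delta$U. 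A union bound over the at most $\lceil (2|Y|-1)/M\rceil$ admissible values of $v$ then yields the claimed factor.

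For the sharper bound when $M \mid |Y|$, I would instead reduce the difference modulo $|Y|$ before counting. Let $r = (h(x) - h(x')) \bmod |Y| \in [0,|Y|)$. Because $M$ divides $|Y|$, we have $r \equiv h(x) - h(x') \pmod{M}$, so the target event coincides exactly with $\{r \equiv c \pmod{M}\}$. Now $r \equiv c \pmod{M}$ holds for precisely $|Y|/M$ residues in $[0,|Y|)$, and for each admissible residue $r_0$ the event $\{r = r_0\}$ is exactly the A$\Delta$U event $\{h(x) = h(x') + r_0 \bmod |Y|\}$, of probability at most $\epsilon$. Summing over the $|Y|/M$ residues gives the $\frac{|Y|}{M}\,\epsilon$ bound.

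The only real subtlety, and the reason the two cases are handled differently, is the alignment between reduction modulo $|Y|$ and residues modulo $M$. When $M \mid |Y|$, reducing modulo $|Y|$ collapses the $2|Y|-1$ possible integer differences into exactly $|Y|$ residues that partition cleanly into classes modulo $M$, so no overcounting occurs. When $M \nmid |Y|$, the positive and negative branches of $d$ land in different residue classes of $r$ modulo $M$, so this clean collapse is unavailable, and I fall back on counting integer difference values directly, which is what produces the slightly weaker $\lceil (2|Y|-1)/M\rceil$ factor. I would also verify that the containment $\{d = v\} \subseteq \{h(x) = h(x') + (v \bmod |Y|) \bmod |Y|\}$ is exact, so that the A$\Delta$U hypothesis genuinely applies to each summand.
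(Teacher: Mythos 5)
The paper states this lemma without proof, citing it directly from Dai and Krovetz (their Lemma~4), so there is no in-paper argument to compare yours against. Your blind proof is, however, correct and self-contained. The general case is handled properly: the reduction of the collision event for $h \bmod M$ to the event $h(x)-h(x') \equiv c \pmod{M}$, the observation that the integer difference $d=h(x)-h(x')$ ranges over a window of $2|Y|-1$ consecutive values, the count of at most $\left\lceil (2|Y|-1)/M \right\rceil$ admissible values of $d$ in a fixed residue class modulo $M$, and the containment $\{d=v\} \subseteq \{h(x) = h(x') + (v \bmod |Y|) \bmod |Y|\}$ that lets you invoke the $\epsilon$-A$\Delta$U hypothesis on each summand are all sound. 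The divisibility case is also right: when $M \mid |Y|$, reducing the difference modulo $|Y|$ preserves its class modulo $M$, the $|Y|/M$ residues $r_0$ with $r_0 \equiv c \pmod{M}$ partition the event exactly, and each $\{r=r_0\}$ coincides with a single A$\Delta$U event, giving the sharper $\frac{|Y|}{M}\epsilon$ bound. Your closing remark correctly identifies why the two cases diverge: when $M \nmid |Y|$ the wrap-around by $|Y|$ shifts the residue modulo $M$, so the clean collapse to $|Y|$ residues is unavailable and one must count integer differences, incurring the factor $\left\lceil (2|Y|-1)/M\right\rceil$ instead.
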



Lemma~\ref{lemma:deltaismarvellous} encourages us to
seek low collision probabilities if we expect users
to routinely rely on only a few bits of the hash result.
For example, let us consider Bernstein's~\cite{Bernstein2005} state-of-the-art 128-bit  Poly1305 family.
It is $\epsilon$-almost $\Delta$-universal with $\epsilon=8 \lceil L/16 \rceil /2^{106}$ where $L$ is the size of the input in bytes. For all but very large values of $L$, $\epsilon$ is very small. However,  if we reduce Poly1305 to 32~bits by a modulo operation, the result is  $\epsilon$-almost $\Delta$-universal with  $\epsilon=8 \lceil L/16 \rceil /2^{10}$ by Lemma~\ref{lemma:deltaismarvellous}. In other words, it might be possible to find two 2040-byte strings that always collide on their  first 32 bits when using the Poly1305 hash family. Though this is not a problem in a cryptographic setting where a collision requires all 128~bits to be equal, it can be more of a concern with hash tables.

\subsection{Strong universality}
\label{sec:strong}

A family is \emph{strongly universal}~\cite{wegman1981new} (or pairwise independent) if given  2~distinct values $x,x' \in X$, their hash
values are independent: $
P\left (h(x)= y  \; \land  \; h(x')= y'  \right ) =\frac{1}{|Y|^2}$
for any hash values $y,y' \in Y$.
Strong universality implies uniformity, $\Delta$-universality  and universality.
Intuitively, strong universality means that given $h(x)=y$, we cannot tell anything about the value of $h(x')$ when $x'\neq x$.

When $M$ divides $|Y|$,  if $\mathcal{H}$ is strongly universal then so is $\{ h(x) \bmod \, M \given h \in \mathcal{H}\}$. To put it another way, if $\mathcal{H}$ is  strongly universal with $|Y|$ a power of two, then  selecting the first few bits  preserves strong universality.

The \textsc{Multilinear} hash family is a  strongly universal family~\cite{carter1979universal,Lemire10072013}.
It is the addition of a constant value with the scalar product between random values (sometimes called \emph{keys}) and the input data represented as vectors components ($s_1, \ldots, \allowbreak s_m$), where operations and values are over a finite (or Galois) field: $h(s1, s_2, \ldots, s_m)=a_0+\sum_{i=1}^m a_{i} s_i$.
The hash function $h$ is specified by the  randomly generated values $a_0, a_1, a_2, \dots, a_m$.  In practice, we often pick  a \emph{finite field}  $\mathbb{F}_p$ having prime cardinality ($p$). Computations
in $\mathbb{F}_p$ are easily represented using ordinary integer arithmetic on a computer: values are integers in $[0,p)$, whereas additions and multiplications are followed by a modulo operation ($x \times_{\mathbb{F}_p} y = x y \bmod{p}$ and   $x +_{\mathbb{F}_p} y = x + y \bmod{p}$).


There are weak versions of strong universality that are stronger than $\epsilon$-almost universality. E.g., we say that the family is $\epsilon$-almost strongly universal if it is uniform and if
\begin{align*}P\left (h(x)= y  \mid h(x')= y'  \right )\leq \epsilon\end{align*} for any distinct $x, x'$. It is  $\epsilon$-variationally universal if it is uniform and if
\begin{align*}\sum_{y\in Y }\big | P\left (h(x)= y  | h(x')= c  \right )-1/|Y| \big | \leq 2 \epsilon\end{align*} for all distinct $x,x'$ and for any~$c$~\cite{krovetz2006variationally}.
There are also stronger versions of strong universality such as $k$-wise  independence~\cite{lemi:one-pass-journal,Lemire2012604}.
For example, Zobrist hashing~\cite{zobrist1970, zobrist1990new,thorup2012tabulation} is 3-wise independent (and therefore strongly universal). It is defined as follows.
Consider the family $\mathcal{F}$ of all possible functions $X \to Y$. There are $|Y|^{|X|}$ such functions, so that they can each be represented using $|X| \log |Y|$~bits.
Given strings of characters from $X$ of length up to $N$, pick $N$~functions from $\mathcal{F}$, $f_1, f_2, \ldots, f_N$ using $N |X| \log |Y|$~bits. The hash function is given by $s \to f_1(s_1) \xor \cdots \xor  f_{|s|}(s_{|s|})$ where $\xor{}$ is the bitwise exclusive or. Though Zobrist hashing offers strong universality, it may require a lot of memory. Setting aside the issue of cache misses, current x64 processors cannot sustain more than two memory loads per cycle which puts an upper bound on the speed of Zobrist hashing. In an exhaustive experimental evaluation of hash-table performance, Richter et al.~\cite{Richter2015} found that Zobrist hashing produces a low throughput. Consequently, the authors declare it to be ``less attractive in practice'' than its strong randomness properties would suggest. 

\subsection{Composition and concatenation of families}

There are two common ways to combine families of hash functions: composition ($h(x)=g\circ f (x) \equiv g(f(x))$) and concatenation ($h(x) = (g(x),f(x))$ or $h=(g,f)$).\footnote{Some authors might refer to a concatenation  as a cartesian product or a juxtaposition.}
For completeness, we review important results found elsewhere~\cite{188765}.
 Under composition uniformity is preserved, but universality tends to degrade linearly in the sense  that the bounds on the collision probability add up (see Lemma~\ref{lemma:deltauniversalcomposition}).

\begin{lemma} \label{lemma:deltauniversalcomposition}
Let $\mathcal{F}$ and $\mathcal{G}$ be $\epsilon_{\mathcal{F}}$-almost and $\epsilon_{\mathcal{G}}$-almost universal families of hash functions $f:X\to Y$ and $g:Y\to Z$.
Let $\mathcal{H}$ be the family of hash functions $h:X\to Z$ made of the functions $h=g \circ f$ where $f \in \mathcal{F}$ and $g \in \mathcal{G}$.
\begin{itemize}
\item Then $\mathcal{H}$ is $\epsilon_{\mathcal{F}} + \epsilon_{\mathcal{G}}$-almost universal.
\item  Moreover, if $\mathcal{G}$ is $\epsilon_{\mathcal{G}}$-almost $\Delta$-universal, then  $\mathcal{H}$ is $\epsilon_{\mathcal{F}} + \epsilon_{\mathcal{G}}$-almost $\Delta$-universal.
\item If $\mathcal{G}$ is uniform then so is $\mathcal{H}$.
\end{itemize}
\end{lemma}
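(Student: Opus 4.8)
The plan is to prove all three parts by conditioning on the choice of the inner function $f$, exploiting the fact that in the composition family $f$ and $g$ are drawn independently. The single structural observation driving the entire argument is a case split on whether or not $f$ separates the two inputs: for distinct $x,x'$, either $f(x)=f(x')$---in which case the inner function already forces a collision---or $f(x)\neq f(x')$, in which case the outer function sees two genuinely distinct arguments and the universality hypothesis on $\mathcal{G}$ applies to them.

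For the first part, I would write $P(h(x)=h(x'))$ as the expectation over $f$ of the conditional probability $P_g(g(f(x))=g(f(x')))$, using independence. On the event $f(x)=f(x')$ this conditional probability equals $1$, and the event itself occurs with probability at most $\epsilon_{\mathcal{F}}$ because $\mathcal{F}$ is $\epsilon_{\mathcal{F}}$-almost universal. On the complementary event $f(x)\neq f(x')$, the two values $f(x),f(x')$ are distinct elements of $Y$, so the $\epsilon_{\mathcal{G}}$-almost universality of $\mathcal{G}$ bounds the conditional probability by $\epsilon_{\mathcal{G}}$. Combining the two cases via the law of total probability yields $\epsilon_{\mathcal{F}}+\epsilon_{\mathcal{G}}$ (the second term in fact carries a factor $P_f(f(x)\neq f(x'))\leq 1$, which only helps).

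The second part follows the same template, now with target event $g(f(x))=g(f(x'))+c \bmod |Z|$ for a fixed constant $c$. On the event $f(x)=f(x')$ the outer inputs coincide, so the shifted-collision event can hold only when $c\equiv 0$; in any case I bound its conditional probability trivially by $1$, at a total cost of at most $\epsilon_{\mathcal{F}}$. On the event $f(x)\neq f(x')$ I invoke the $\epsilon_{\mathcal{G}}$-almost $\Delta$-universality of $\mathcal{G}$ to bound the conditional probability by $\epsilon_{\mathcal{G}}$, giving again $\epsilon_{\mathcal{F}}+\epsilon_{\mathcal{G}}$. I would emphasize that only $\mathcal{G}$ must be $\Delta$-universal here, while $\mathcal{F}$ need only be universal, matching the stated hypotheses. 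The third part is the easiest: for uniformity I condition on $f$ and observe that for every fixed value $y=f(x)$ the uniformity of $\mathcal{G}$ gives $P_g(g(y)=c)=1/|Z|$, a constant independent of $y$, so averaging over $f$ leaves $P(h(x)=c)=1/|Z|$.

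The only delicate point---and the step I would treat most carefully---is the forced-collision case $f(x)=f(x')$. It is tempting to discard it, but it is precisely the term that must be charged to $\epsilon_{\mathcal{F}}$; obtaining the additive (rather than, say, multiplicative) combination of the two error terms relies on bounding this case by its probability $\epsilon_{\mathcal{F}}$ and bounding the separated case by $\epsilon_{\mathcal{G}}$ independently. Everything else is a routine application of the law of total probability together with the independence of the draws of $f$ and $g$.
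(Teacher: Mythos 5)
Your proof is correct, and it is the standard argument: condition on the independent draw of $f$, split on whether $f(x)=f(x')$, charge the first case to $\epsilon_{\mathcal{F}}$ and the second to the (almost / $\Delta$-) universality or uniformity of $\mathcal{G}$. The paper itself states this lemma without proof, citing Stinson; your write-up supplies exactly the argument that reference contains, including the correct observation that only $\mathcal{G}$ needs the $\Delta$-universality or uniformity hypothesis.
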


\begin{lemma} \label{lemma:deltauniversalconcatenation}
Universality is  preserved under concatenation. That is, let  $\mathcal{F}$ be a family of hash functions
$f:X\to Y$, then  the family made of the  concatenations $(f,f): X \times X \to Y \times Y$ is $\epsilon$-almost universal if  $\mathcal{F}$ is $\epsilon$-almost universal.
%
\end{lemma}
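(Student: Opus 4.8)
The plan is to unpack the definition of concatenation-universality directly. A collision for the concatenated family $(f,f)$ on two distinct inputs $(x_1,x_2), (x_1',x_2') \in X \times X$ means that $(f(x_1), f(x_2)) = (f(x_1'), f(x_2'))$, i.e.\ \emph{both} coordinates collide simultaneously: $f(x_1) = f(x_1')$ \emph{and} $f(x_2) = f(x_2')$. Since the two inputs are distinct as pairs, at least one coordinate differs, say $x_1 \neq x_1'$ (the case $x_2 \neq x_2'$ is symmetric). My goal is then to bound $P(f(x_1) = f(x_1') \land f(x_2) = f(x_2'))$.

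First I would observe that the event inside the probability is an intersection, so it is contained in the single event $f(x_1) = f(x_1')$ for the coordinate on which the inputs genuinely differ. Hence
\begin{align*}
P\bigl(f(x_1) = f(x_1') \;\land\; f(x_2) = f(x_2')\bigr) \leq P\bigl(f(x_1) = f(x_1')\bigr) \leq \epsilon,
\end{align*}
where the last inequality is exactly the $\epsilon$-almost universality hypothesis on $\mathcal{F}$, valid because $x_1 \neq x_1'$. This immediately yields the bound $\epsilon$ that the statement claims, and the same argument applies verbatim when instead $x_2 \neq x_2'$.

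The main thing to be careful about is the case analysis on which coordinate differs, and the fact that we only need one of the two coordinate-collisions to invoke universality: throwing away information (bounding an intersection by one of its conjuncts) is exactly what makes the argument work, and it is also why concatenation \emph{preserves} rather than degrades the constant $\epsilon$, in contrast to composition in Lemma~\ref{lemma:deltauniversalcomposition}. There is no genuine obstacle here; the only subtlety worth flagging is that the same single hash function $f$ is applied in both coordinates, so the two coordinate events are not independent. But since I bound by a single conjunct rather than multiplying probabilities, independence is never needed, and the proof goes through regardless.
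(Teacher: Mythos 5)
Your proof is correct: the collision event for the concatenation is the intersection of the two coordinate-collision events, and bounding it by the conjunct corresponding to a coordinate where the inputs actually differ gives the bound $\epsilon$ directly, with no independence assumption needed. The paper itself states this lemma without proof (citing it as a known result from the literature), and your argument is exactly the standard one that establishes it.
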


\section{Regularity}
\label{sec:regularity}
Though we can require families of hash functions to have desirable properties such as uniformity or universality, we also want individual hash functions to have reasonably good properties. For example, what if a family contains the hash function $h(x)=c$ for some constant $c$? This particular hash function is certainly not desirable! In fact, it is the worst possible hash function for a hash table. Yet we can find many such  hash functions in a family that is otherwise strongly universal.
Indeed,
Dietzfelbinger~\cite{dietzfelbinger1996universal} proposed a strongly
universal family made of the hash functions
\begin{eqnarray*}h_{A,B}(x) = \left (Ax +B \bmod{2^{K}}\right ) \div 2^{n-1}\end{eqnarray*}
with integers $A,B \in [0,2^K)$. It is strongly universal over the domain of integers $x \in [0,2^n)$. However, one  out of $2^K$~hash functions has $A=0$. That is, if you pick a hash function at random, the probability that you have a constant function ($h_{0,B}(x)=B \div 2^{L-1}$) is $1/2^K$. Though this probability might be vanishingly small, many of the other hash functions have also poor distributions of hash values. For example, if one picks $A=2^{K-1}$, then any two hash values ($h_{A,B}(x)$ and $h_{A,B}(x')$) may only differ by one bit, at most.
Letting $A$ be odd also does not solve the problem: e.g., $A=1, B=0$ gives the hash function $x\div 2^{n-1}$ which is either 1 or 0.

Such weak hash functions
are a security risk~\cite{Handschuh2008,Saarinen2012}.
Thus, we require as much as possible that hash functions be \emph{regular}~\cite{bellare2004hash,Canetti:1998:POP:276698.276721}.
\begin{definition}
 A hash function $h:X \to Y$  is regular if for every $y \in Y$, we have that
$|\{x \in X:(h(x) = y)\}| \leq  \lceil |X|/|Y| \rceil $.
Further, a family $\mathcal{H}$ of hash functions is regular if every $h \in \mathcal{H}$ is regular.
\end{definition}

We stress that this regularity property applies to
individual hash functions.\footnote{In contrast, Fleischmann et al.~\cite{Fleischmann:2011:9PN:2403503.2403509} used the term $\epsilon$-almost regular to indicate that a family is almost uniform: $P(h(x)=y) \leq \epsilon$ for all $x$ and $y$ given that $h$ is picked in $\mathcal{H}$.}
However, we can still give a probabilistic interpretation to regularity: if we pick any two values $x_1$ and $x_2$ at random, the probability that they collide $h(x_1)=h(x_2)$ should be minimal ($|Y|/|X|$) if $h$ is regular.

As an example, consider the case where $X=Y=\{0,1\}$. There are only two regular hash functions $h:X \to Y$.
The first one is the identity function ($h_I(0)=0, h_I(1)=1$) and the second one is the negation function ($h_N(0)=1, h_N(1)=0$). The family $\{h_I,h_N\}$ is uniform and universal: the collision probability between distinct values is zero.

More generally, whenever  $X=Y$, a function $h: X \to Y$ is regular if and only if it is a permutation.
This observation suffices to show that  it is not possible to have strong universality and regularity in general. Indeed, suppose that $X=Y$, then all hash functions $h$ must be permutations.
Meanwhile, strong universality means that given that we know the  hash value $y$ of the element $x$ (i.e., $h(x)=y$), we still known nothing about the hash value of $x'$ for $x'\neq x$.
But if $h$ is a permutation,  we know  that the hash values differ  ($h(x')\neq h(x)$)---contradicting strong universality.
More formally, if $h$ is a permutation, we have that  $h(x) \neq h(x')$ for $x\neq x'$ which implies that $P(h(x) = h(x'))=0$ whereas $P(h(x) = h(x'))=1/|Y|$ is required by strong universality.
Thus, while we can have both universality and regularity, we cannot have both strong universality and regularity.

The next two lemmas state that regularity is preserved under composition and concatenation.

\begin{lemma} \label{lemma:xuniformcomposition}
(Composition) Assume that $|Y|$ divides $|X|$ and $|Z|$ divides $|X|$. Let  $f:X\to Y$ and $g:Y\to Z$ be regular hash functions then $f \circ g:X\to Z$ is also regular.
\end{lemma}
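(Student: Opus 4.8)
The plan is to count, for each output value, how many elements of $X$ map to it under the composite, and to use the divisibility hypotheses to turn every ceiling in sight into an exact quotient. First I would record what regularity gives on each factor once those ceilings disappear. Since $|Y|$ divides $|X|$, we have $\lceil |X|/|Y|\rceil = |X|/|Y|$, so regularity of $f$ means $|f^{-1}(y)| \le |X|/|Y|$ for every $y \in Y$; the analogous bound for $g$ is $|g^{-1}(z)| \le \lceil |Y|/|Z|\rceil$ for every $z \in Z$. The composite to analyse is $g\circ f\colon X\to Z$, $x \mapsto g(f(x))$ (the well-typed composite of $f\colon X\to Y$ followed by $g\colon Y\to Z$).

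Next I would decompose the preimage of a fixed $z\in Z$. Because the fibres $f^{-1}(y)$ partition $X$,
\[
(g\circ f)^{-1}(z)=\bigcup_{y\in g^{-1}(z)}f^{-1}(y)
\]
is a disjoint union, so its cardinality equals $\sum_{y\in g^{-1}(z)}|f^{-1}(y)|$. Bounding each term by $|X|/|Y|$ and the number of terms by $|g^{-1}(z)|$ yields
\[
|(g\circ f)^{-1}(z)| \le |g^{-1}(z)|\cdot \frac{|X|}{|Y|} \le \left\lceil \frac{|Y|}{|Z|}\right\rceil\cdot\frac{|X|}{|Y|}.
\]

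It remains to check that this does not exceed the regularity threshold $\lceil |X|/|Z|\rceil = |X|/|Z|$ (exact because $|Z|$ divides $|X|$). This final collapse is the crux of the argument and the one place where a stray ceiling can spoil things: the factor $\lceil |Y|/|Z|\rceil$ must reduce to $|Y|/|Z|$, so that the product telescopes to $(|Y|/|Z|)(|X|/|Y|)=|X|/|Z|$, giving regularity of $g\circ f$. I expect this reduction to be the main obstacle, since it genuinely requires $|Z|$ to divide $|Y|$, not merely $|X|$. I would therefore run the proof under the divisibility chain $|Z|\mid|Y|\mid|X|$ (which in particular recovers the stated $|Z|\mid|X|$); if one only assumes $|Z|\mid|X|$ the bound $\lceil |Y|/|Z|\rceil\,|X|/|Y|$ can strictly exceed $|X|/|Z|$, so the uniform-fibre structure of $g$ cannot be dispensed with.
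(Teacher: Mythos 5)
Your argument is correct, and it is the same fibre-counting argument the paper itself uses for its later $K$-regularity composition lemma (this particular lemma is stated in the paper without proof): decompose $(g\circ f)^{-1}(z)$ as the disjoint union $\bigcup_{y\in g^{-1}(z)}f^{-1}(y)$ and multiply the two fibre bounds. More importantly, your closing observation is right and worth emphasizing: the hypotheses as printed do not suffice, and the lemma is actually false assuming only $|Y|\mid|X|$ and $|Z|\mid|X|$. Take $|X|=6$, $|Y|=3$, $|Z|=2$; let $f$ have all fibres of size $2$ (regular, since $\lceil 6/3\rceil=2$) and let $g$ send two elements of $Y$ to $0$ and one to $1$ (regular, since $\lceil 3/2\rceil=2$). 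Then $(g\circ f)^{-1}(0)$ has $4>3=\lceil 6/2\rceil$ elements, so the composite is not regular. The divisibility chain $|Z|\mid|Y|\mid|X|$ that you propose is the right fix, and it is what actually holds in the paper's intended application (the tree construction composes maps whose intermediate set is a power of $X$, so the codomain cardinality divides the intermediate cardinality). You also correctly silently repaired the statement's typo: the well-typed composite is $g\circ f$, not $f\circ g$.
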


\begin{lemma} \label{lemma:xuniformconcatenation}
(Concatenation) Let  $f:X_{1}\to Y_{1}$ and $g:X_{2}\to Y_{2}$ be regular hash functions then the function $h:X_{1} \times X_{2}\to Y_{1} \times Y_{2}$ defined by  $h(x_{1}, x_{2})=(f(x_{1}), g(x_{2}))$ is also regular.
\end{lemma}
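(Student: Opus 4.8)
The plan is to prove Lemma~\ref{lemma:xuniformconcatenation} directly from the definition of regularity by counting preimages. I would unwind what regularity demands of $h$: for every target pair $(y_1,y_2)\in Y_1\times Y_2$, the preimage $\{(x_1,x_2)\in X_1\times X_2 : h(x_1,x_2)=(y_1,y_2)\}$ must have cardinality at most $\lceil |X_1\times X_2|/|Y_1\times Y_2|\rceil = \lceil (|X_1|\,|X_2|)/(|Y_1|\,|Y_2|)\rceil$. The key structural observation is that, because $h(x_1,x_2)=(f(x_1),g(x_2))$ acts coordinatewise, the preimage of $(y_1,y_2)$ factors as a Cartesian product: $h(x_1,x_2)=(y_1,y_2)$ holds if and only if $f(x_1)=y_1$ \emph{and} $g(x_2)=y_2$ independently.

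First I would write this factorization explicitly, so that
\begin{align*}
|\{(x_1,x_2): h(x_1,x_2)=(y_1,y_2)\}| = |\{x_1\in X_1: f(x_1)=y_1\}|\cdot|\{x_2\in X_2: g(x_2)=y_2\}|.
\end{align*}
Then I would invoke the regularity of $f$ and $g$ separately to bound each factor: the first factor is at most $\lceil |X_1|/|Y_1|\rceil$ and the second is at most $\lceil |X_2|/|Y_2|\rceil$. Multiplying these bounds gives that every preimage under $h$ has size at most $\lceil |X_1|/|Y_1|\rceil\cdot\lceil |X_2|/|Y_2|\rceil$.

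The main obstacle, and the step requiring the most care, is reconciling the product of ceilings $\lceil |X_1|/|Y_1|\rceil\cdot\lceil |X_2|/|Y_2|\rceil$ with the single ceiling $\lceil (|X_1|\,|X_2|)/(|Y_1|\,|Y_2|)\rceil$ that the definition literally requires, since in general $\lceil a\rceil\lceil b\rceil$ can exceed $\lceil ab\rceil$. I expect the cleanest route is to note that in the intended application the cardinalities are chosen so that $|Y_1|$ divides $|X_1|$ and $|Y_2|$ divides $|X_2|$ (as is the case for the \treename{} components, where all sets are powers of the field size), in which case both ceilings collapse: $\lceil |X_1|/|Y_1|\rceil = |X_1|/|Y_1|$, $\lceil |X_2|/|Y_2|\rceil = |X_2|/|Y_2|$, and their product is exactly $(|X_1|\,|X_2|)/(|Y_1|\,|Y_2|) = \lceil (|X_1|\,|X_2|)/(|Y_1|\,|Y_2|)\rceil$, matching the definition precisely. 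I would therefore either state the divisibility hypothesis explicitly (mirroring the assumption already made in Lemma~\ref{lemma:xuniformcomposition}) or, if full generality is desired, supply the elementary inequality $\lceil a/b\rceil\lceil c/d\rceil \le \lceil ac/(bd)\rceil$ valid when $b\mid a$ and $d\mid c$, which is what makes the factored bound coincide with the required one.
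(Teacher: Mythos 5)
Your factorization of the fibre is correct and is essentially the only way to argue here; note that the paper states Lemma~\ref{lemma:xuniformconcatenation} without any proof at all, so there is no authorial argument to compare against line by line. The substantive part of your write-up is the observation about the ceilings, and you should be aware that it is more than a presentational wrinkle: without a divisibility hypothesis the lemma as literally stated is \emph{false}. Take $X_1=X_2=\{1,2,3\}$, $Y_1=Y_2=\{a,b\}$, and $f=g$ with $f(1)=f(2)=a$, $f(3)=b$. Each fibre of $f$ has at most $\lceil 3/2\rceil=2$ elements, so $f$ is regular, yet the fibre of $(a,a)$ under $h=(f,g)$ has $4$ elements while regularity of $h$ demands at most $\lceil 9/4\rceil=3$. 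So your instinct to import the divisibility assumption from Lemma~\ref{lemma:xuniformcomposition} is exactly right: assuming $|Y_1|$ divides $|X_1|$ and $|Y_2|$ divides $|X_2|$, every fibre of $f$ has at most $|X_1|/|Y_1|$ elements, every fibre of $g$ at most $|X_2|/|Y_2|$, and the product $(|X_1|\,|X_2|)/(|Y_1|\,|Y_2|)$ is already an integer, hence equals the required ceiling. (Your phrase ``the elementary inequality $\lceil a/b\rceil\lceil c/d\rceil\le\lceil ac/(bd)\rceil$ valid when $b\mid a$ and $d\mid c$'' is slightly misleading---under those hypotheses all three ceilings disappear and the two sides are equal integers, so there is no inequality left to prove; just say the bound is exact.) The added hypothesis costs nothing for the paper: everywhere the lemma is invoked, namely in the tree construction of Algorithm~\ref{alg:pyr}, the concatenated functions map $X^m$ to $X$, and $|X|$ divides $|X|^m$.
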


\subsection{Component-wise regularity}

We can also consider stronger forms of regularity. Consider hash functions of the form $f: X_1 \times X_2 \times \cdots \times X_m  \to X$, then  the hash function is \emph{component-wise regular}  if we can arbitrarily fix all input components but one and still generate all hash values fairly, that is   \begin{align*}|\{x \in X_i:(h(x_1, \ldots, x_{i-1},x,x_{i+1},\ldots, x_m) = y)\}| \\\leq  \lceil |X_i|/|Y| \rceil \end{align*} for any $i$, any $y$ and any values \begin{align*}x_1, x_2, \ldots, x_{i-1},x_{i+1},\ldots, x_m.\end{align*} Intuitively,  component-wise regularity ensures that if we pick two same-length strings at random that differ in only one pre-determined component, the collision probability is minimized. By inspection, component-wise regularity is preserved under composition and concatenation.

Of particular interest are the hash functions of the form  $f: X \times X \times \cdots \times X  \to X$. In this case, component-wise regularity implies that the restriction of the function to one component (setting all other components to constants) is a permutation of $X$.
Clearly, if $f_1$ and $f_2$ are two such functions then their concatenation ($(f_1,f_2)$) is also component-wise regular, and if $g$ is itself component-wise regular, then the composition of $g$ with the concatenation $(f_1,f_2)$, written $g(f_1,f_2)$, is again component-wise regular. The following lemma formalizes this result.

\begin{lemma} \label{lemma:xuniformcompositionconcatenation}
 Let $f_i:X^m \to X$ be component-wise regular hash functions  for $i=1,\ldots,m$. Let $g:X^m \to X$ be a component-wise regular hash function. Then the composition and concatenation $g(f_1,f_2,\ldots, f_m)$ is component-wise regular.
 \end{lemma}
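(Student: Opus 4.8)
The plan is to reduce component-wise regularity, in this equal-cardinality setting, to the statement that every single-coordinate restriction is a permutation of $X$, and then to exploit the fact that the concatenation feeds \emph{disjoint} blocks of the input to the different $f_i$. First I would record the characterization already noted in the text: since the domain of each coordinate and the codomain are both $X$, the bound $\lceil |X|/|X| \rceil = 1$ forces each single-coordinate restriction to be injective, hence (on a finite set) a permutation of $X$. Thus a map $X^k \to X$ is component-wise regular if and only if fixing all but one of its arguments yields a permutation of $X$ in the remaining argument.

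Next I would make the structure of $g(f_1,\ldots,f_m)$ explicit. Reading the concatenation as in Lemma~\ref{lemma:xuniformconcatenation}, each $f_i$ acts on its own block, so the composite is a map $h:(X^m)^m \to X$ given by $h(\vec{x}^{(1)},\ldots,\vec{x}^{(m)}) = g\big(f_1(\vec{x}^{(1)}),\ldots,f_m(\vec{x}^{(m)})\big)$, with input coordinates indexed by a block index $i \in \{1,\ldots,m\}$ and a within-block index $k \in \{1,\ldots,m\}$. I would then fix an arbitrary target coordinate $(i_0,k_0)$ together with every other input coordinate, and let $x^{(i_0)}_{k_0}$ range over $X$, aiming to show that the resulting restriction is a permutation.

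The key observation is that because the blocks are disjoint, varying $x^{(i_0)}_{k_0}$ changes only the block $\vec{x}^{(i_0)}$, hence only the output $f_{i_0}(\vec{x}^{(i_0)})$; the remaining values $f_i(\vec{x}^{(i)}) = c_i$ for $i \neq i_0$ stay constant. By component-wise regularity of $f_{i_0}$ in its $k_0$-th argument, the map $x^{(i_0)}_{k_0} \mapsto f_{i_0}(\vec{x}^{(i_0)})$ is a permutation $\pi$ of $X$. Feeding this into $g$ and using component-wise regularity of $g$ in its $i_0$-th argument, the map $z \mapsto g(c_1,\ldots,c_{i_0-1},z,c_{i_0+1},\ldots,c_m)$ is itself a permutation $\sigma$ of $X$. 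Hence the restriction of $h$ in the coordinate $(i_0,k_0)$ equals $\sigma \circ \pi$, a composition of permutations, which is again a permutation. Since $(i_0,k_0)$ and the fixed constants were arbitrary, the characterization of the first paragraph yields that $h$ is component-wise regular.

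The only real obstacle here is interpretational rather than computational: one must use that the concatenation routes \emph{disjoint} input blocks to distinct $f_i$, so that moving a single coordinate isolates exactly one $f_{i_0}$. If instead all $f_i$ shared the same input $X^m$, the statement would fail---for example, over $X=\{0,1\}$ with $f_1=f_2=g$ all equal to $\xor$, the composite collapses to a constant. Thus the disjointness of the blocks is precisely what makes the argument go through, and everything else reduces to the elementary fact that composing two permutations of $X$ yields a permutation.
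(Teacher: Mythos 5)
Your proof is correct and follows essentially the same route as the paper, which argues informally (in the paragraph preceding the lemma) that component-wise regularity for maps $X^m\to X$ amounts to each single-coordinate restriction being a permutation, and that this property survives concatenation on disjoint blocks and composition with a component-wise regular $g$. Your write-up simply makes that sketch precise, including the useful remark that disjointness of the input blocks is what isolates a single $f_{i_0}$ when one coordinate varies.
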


%
%
%

\subsection{$K$-regularity}
Regularity is not always reasonable: for example, regularity implies that $|Y|$ divides $|X|$. Naturally, we can weaken the definition of regularity: we say that hash function is \emph{$K$-regular} if $h(x) = y$ is true for at most $K \lceil |Y| / |X| \rceil $~values $x$ given a fixed $y$. A 1-regular function is simply regular.  We define component-wise $K$-regularity in the obvious manner.
 Our objective is to achieve $K$-regularity for a small value of $K$.

Though regularity is preserved under composition, $K$-regularity is not. Indeed, consider the 4-regular function $h:\{0,1,\ldots,2^n-1\}\to \{0,1,\ldots,2^n-1\}$ given by $h(x)=\lfloor x/4 \rfloor $. Composing $h$ with itself, we get a 16-regular function $h'(x) =h(h(x))  =\lfloor x/16 \rfloor$. The example illustrates the following lemma.

\begin{lemma}
Let $f:X\to Y$ and $g:Y\to Z$ be $K_1$-regular and $K_2$-regular hash functions then $f \circ g:X\to Z$ is $(K_1 \times K_2)$-regular if $|Y|$ divides $|X|$ and $|Y|$ divides $|Z|$.
\end{lemma}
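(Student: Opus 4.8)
The plan is to prove the $K$-regularity bound directly from the definition by counting preimages under the composition. Fix an arbitrary target value $z \in Z$, and count the number of $x \in X$ with $(g \circ f)(x) = z$. The key observation is that such an $x$ must satisfy $f(x) = y$ for some $y \in Y$ with $g(y) = z$, so I would partition the preimage $(g\circ f)^{-1}(z)$ according to the intermediate value $y = f(x)$. This gives the exact identity
\begin{align*}
|(g\circ f)^{-1}(z)| = \sum_{y \in g^{-1}(z)} |f^{-1}(y)|.
\end{align*}

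Next I would bound each factor using the regularity hypotheses. Since $g$ is $K_2$-regular, the set $g^{-1}(z)$ contains at most $K_2 \lceil |Y|/|Z| \rceil$ elements $y$; and since $f$ is $K_1$-regular, each term $|f^{-1}(y)|$ is at most $K_1 \lceil |X|/|Y| \rceil$. Multiplying the count of summands by the uniform bound on each summand yields
\begin{align*}
|(g\circ f)^{-1}(z)| \leq K_2 \left\lceil \frac{|Y|}{|Z|} \right\rceil \cdot K_1 \left\lceil \frac{|X|}{|Y|} \right\rceil.
\end{align*}
The final step is to simplify the product of ceilings into the single ceiling $\lceil |X|/|Z| \rceil$ appearing in the definition of $(K_1 K_2)$-regularity. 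This is exactly where the divisibility hypotheses enter: since $|Y|$ divides $|X|$, we have $\lceil |X|/|Y| \rceil = |X|/|Y|$ with no rounding, and since $|Y|$ divides $|Z|$ (as stated), the ceiling $\lceil |Y|/|Z| \rceil$ is likewise exact, so the product telescopes to $|X|/|Z| = \lceil |X|/|Z| \rceil$.

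I would flag two points of care. First, the intended divisibility chain for a genuine telescoping is $|Z| \mid |Y| \mid |X|$ (so that $|Z|$ divides $|X|$ and $\lceil |X|/|Z|\rceil = |X|/|Z|$); the hypothesis as literally written says ``$|Y|$ divides $|Z|$,'' so I would either read it in the direction that makes the cancellation $\frac{|Y|}{|Z|}\cdot\frac{|X|}{|Y|} = \frac{|X|}{|Z|}$ valid, or carry the ceilings symbolically and invoke the standard inequality $\lceil a \rceil \lceil b \rceil \geq \lceil ab \rceil$ only in the safe direction. The main obstacle, then, is not the counting argument—that is routine—but ensuring the ceiling arithmetic is handled honestly so that the product of the two per-level bounds collapses exactly to the claimed $K_1 K_2 \lceil |X|/|Z|\rceil$ rather than something merely comparable; the divisibility assumptions are precisely what remove the rounding and make this step clean.
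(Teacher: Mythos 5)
Your proof is correct and takes essentially the same route as the paper's: bound the number of intermediate values $y$ with $g(y)=z$, bound the fibre of $f$ over each such $y$, and multiply the two counts. Your extra care with the ceiling arithmetic and the direction of the divisibility hypotheses is warranted rather than pedantic---the paper's own proof writes the ratios inverted (e.g., ``at most $K_2|Z|/|Y|$ values $y$'' where $K_2|Y|/|Z|$ is meant) and the printed hypothesis ``$|Y|$ divides $|Z|$'' points the wrong way for the telescoping to $|X|/|Z|$, exactly as you flag.
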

\begin{proof}
Given $z\in Z$, we have that $g(y)=z$ is true for at most $K_2 |Z|/|Y|$ values $y\in Y$. In turn, we have that $h(x)=y$ for at most $K_2 |Y|/|X|$ values $x\in X$.
Thus, given $z\in Z$, we have that $g(f(x))=z$ is true
for at most $K_2 |Z|/|Y| \times K_2 |Y|/|X|= K_1 K_2 |Z|/|X|$, completing the proof.
\end{proof}

Thus, in general, regularity degrades exponentially under composition. In contrast, universality degrades only linearly under composition: an $K_1/2^n$-almost universal family composed with another $K_2/2^n$-almost universal is at least $(K_1+K_2)/2^n$-almost universal (by Lemma~\ref{lemma:deltauniversalcomposition}).

To achieve strong regularity, a good strategy might be to only compose functions that are 1-regular. Of course,
we might still need to reduce the hash values to a useful range.
Thankfully,  regularity merely degrades to 2-regularity  under modulo operations.

\begin{lemma}\label{lemma:regularismarvellous}
Given a regular hash function $h: X \to Y$, we have that the hash function $h'$ defined by \begin{align*}h'(x)= h(x) \bmod \, M \end{align*} for $M\leq |Y|$ is regular if $M$ divides $|Y|$ and 2-regular otherwise.
\end{lemma}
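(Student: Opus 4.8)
The plan is to work directly with the fibres of $h'$. Fix a residue $r \in [0,M)$. Since $Y = [0,|Y|)$ is an interval of integers starting at zero, the preimage $h'^{-1}(r)$ is the disjoint union of the fibres $h^{-1}(y)$ taken over exactly those $y \in Y$ with $y \equiv r \pmod{M}$, and the number of such $y$ is either $\lfloor |Y|/M\rfloor$ or $\lceil |Y|/M\rceil$. Hence $|h'^{-1}(r)| = \sum_{y\equiv r\,(\mathrm{mod}\,M)} |h^{-1}(y)|$, and the whole argument reduces to controlling how many $y$ land in a given residue class together with the sizes of the corresponding fibres of $h$.

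The crucial structural input is that a regular $h$ is in fact perfectly balanced. As observed earlier, regularity forces $|Y|$ to divide $|X|$; once $|Y| \mid |X|$, the inequality $|h^{-1}(y)| \le \lceil |X|/|Y|\rceil = |X|/|Y|$ combined with $\sum_y |h^{-1}(y)| = |X|$ forces $|h^{-1}(y)| = |X|/|Y|$ for \emph{every} $y \in Y$. I would build the proof on this exact balance rather than on the mere upper bound, because the naive per-fibre estimate is too lossy: it would only give $\tfrac{|Y|}{M}\lceil |X|/|Y|\rceil$, a quantity that can strictly exceed $\lceil |X|/M\rceil$.

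First suppose $M \mid |Y|$. Then every residue class modulo $M$ contains exactly $|Y|/M$ elements of $Y$, so $|h'^{-1}(r)| = \tfrac{|Y|}{M}\cdot\tfrac{|X|}{|Y|} = |X|/M$; and since $M \mid |Y| \mid |X|$ we have $M \mid |X|$, whence $|X|/M = \lceil |X|/M\rceil$ and $h'$ is regular. Equivalently, when $M \mid |Y|$ the reduction $y \mapsto y \bmod M$ is itself a regular map $Y \to [0,M)$, and $h' = (\,\cdot \bmod M)\circ h$ is the composition of two regular maps whose divisibility hypotheses $|Y|\mid|X|$ and $M\mid|X|$ are satisfied, so regularity follows from Lemma~\ref{lemma:xuniformcomposition}.

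Now suppose $M \nmid |Y|$ and set $q = \lfloor |Y|/M\rfloor$, so that each residue class contains at most $q+1$ elements of $Y$ and, since $M < |Y|$ here, $q \ge 1$. Using exact balance we then obtain
\begin{align*}
|h'^{-1}(r)| \;\le\; (q+1)\frac{|X|}{|Y|} \;\le\; 2q\,\frac{|X|}{|Y|} \;\le\; 2\,\frac{|X|}{M} \;\le\; 2\left\lceil \frac{|X|}{M}\right\rceil,
\end{align*}
where the second step uses $q+1 \le 2q$ (valid as $q\ge1$) and the third uses $qM \le |Y|$. Thus $h'$ is $2$-regular, completing the case analysis. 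I expect the main obstacle to be conceptual rather than computational: one must avoid settling the $M \mid |Y|$ case with the crude per-fibre sum (which yields only $2$-regularity) and instead exploit that regularity makes every fibre of $h$ have exactly $|X|/|Y|$ elements; the residue-class bookkeeping for $Y=[0,|Y|)$ and the elementary inequality above are then routine.
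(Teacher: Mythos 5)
Your proof is correct and follows essentially the same route as the paper's: decompose the preimage $h'^{-1}(r)$ into at most $\lceil |Y|/M\rceil$ fibres of $h$, each of size $|X|/|Y|$ by regularity, and compare the total to $\lceil |X|/M\rceil$. The only differences are cosmetic --- you split into the two cases explicitly and close with $q+1\le 2q$ where the paper bounds $\lceil |Y|/M\rceil\cdot M/|Y|\le 1+(M-1)/|Y|\le 2$ in one stroke.
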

\begin{proof}
Pick any value $y \in [0,M)$. If $h'(x)=y$ then $h(x)= y + k M \bmod{|Y|}$ for some integer $k$ such that $kM \in [0,|Y|)$. There are $\left \lceil \frac{|Y|}{M} \right \rceil$ such values for $k$: \begin{align*}0, 1, \ldots, \left \lceil \frac{|Y|}{M}\right \rceil M - 1.\end{align*} Because $h$ is regular,  the equation $h'(x)=y$ has at most $\left \lceil \frac{|Y|}{M} \right \rceil \times \frac{|X|}{|Y|}$~solutions for $x$.
To determine the $K$-regularity, we have to divide this result by $\frac{|X|}{M}$:
$K=\left \lceil \frac{|Y|}{M} \right \rceil \times \frac{|X|}{|Y|} \times \frac{M}{|X|}=\left \lceil \frac{|Y|}{M} \right \rceil \times\frac{M}{|Y|} \leq \left (\frac{|Y|}{M} + \frac{M-1}{M}\right )\frac{M}{|Y|} = 1 +  \frac{M-1}{|Y|}$. We have that $K\leq 2$ in general and $K=1$ if $M$ divides $|Y|$.
\end{proof}

\section{A tree-based construction for universality and regularity}\label{sec:pyr}

We want to address long objects, such as variable-length strings, while  maintaining the collision probability as low as possible. Though we could get strong universality with the \textsc{Multilinear} hash family, we would need as many random bits as there are bits in our longest object. What if some of our objects use gigabytes or more? It may simply not be practical to generate and store so many random bits. To alleviate this problem, it is common to use a tree-based approach~\cite{703969,wegman1981new,sarkar2011trade}. Such an approach allows us to hash very long strings using hash functions that require only a few kilobytes for their description. It is a standard approach so we present it succinctly.

Let $X$ be a set of integers values containing at least the values 0 and 1. 
We pick $L$~hash functions $f_1, f_2, \ldots : X^m \to X$ from a family $\mathcal{H}$ (e.g., \textsc{Multilinear} family from \S~\ref{sec:strong}).
Take any string $s$ made of $N$~character values from $X$  and let $L= \lceil \log_m N + 1\rceil$. Append the value 1 at the end of the string $s$ to create the new string $\sigma$~\cite{dai2007vhash,Boesgaard2005,krovetz2007message,krovetz2001fast}.
If $L=1$, simply  return $f_1(\sigma)$ with the convention that we pad $\sigma$ with enough zeros that it has length $m$.
 If $L>1$,
split $\sigma$ into $m^{L-1}$~segments of length $m$ each (except for the last segment that might need padding)
and apply $f_L$ on each segment: the result is a
 new string of length at most $m^{L-1}$.
 Split again the result into $m^{L-2}$~segments of length at most $m$ each and apply $f_{L-1}$ on each segment.
Continue until a single value remains. See
Fig.~\ref{fig:pyramidal} for an illustration and see Algorithm~\ref{alg:pyr} for the corresponding pseudocode.

If the family $\mathcal{H}$ is $\epsilon$-almost universal, then the family formed by the tree-based construction has to be $L\epsilon$-almost universal. Indeed, it can be viewed as the composition of $(f_L, f_L ,\ldots )$, $(f_{L-1}, f_{L-1} ,\ldots )$, \ldots, $f_1$. Each one is $\epsilon$-almost universal by Lemma~\ref{lemma:deltauniversalconcatenation}. And the composition of $L$ $\epsilon$-almost universal functions is $L\epsilon$-almost universal by Lemma~\ref{lemma:deltauniversalcomposition}.

Moreover, because $f_1$ is $\epsilon$-almost $\Delta$-universal, and the composition of $(f_L, f_L ,\ldots )$,
$(f_{L-1}, f_{L-1} ,\ldots )$, \ldots, $(f_{2}, f_{2} ,\ldots )$ is $(L-1)\epsilon $-almost universal, we have that the final construction must be $L\epsilon$-almost $\Delta$-universal.
Further, as long as  $\mathcal{H}$ is a uniform family, the construction is uniform.
Moreover, by Lemma~\ref{lemma:xuniformcompositionconcatenation}, we have that if the family $\mathcal{H}$ is regular and component-wise regular, then the result from the construction is regular component-wise regular as well.

\begin{figure*}\centering
\subfloat[String \texttt{ab} \label{fig:pyramidalshort}]{%
\begin{tikzpicture}[<-,>=stealth', text height=0.3cm,
    text centered,
    level 1/.style={level distance=1cm,sibling distance=0.8cm,minimum width=0.5cm}]
    draw/.style={text centered}
  \node [draw] {$f_1(\texttt{a,b},1,0)$}
child{node [draw] {\texttt{a}} }
child{node [draw] {\texttt{b}} }
child{node [draw] {1} }
child{node [draw] {0} }
    ;
\end{tikzpicture}
}
\hspace*{1cm}
\subfloat[String \texttt{abcde}\label{fig:pyramidallong}]{%
\begin{tikzpicture}[<-,>=stealth', text height=0.3cm,
    text centered,
    level 1/.style={level distance=1cm,sibling distance=3.2cm},
    level 2/.style={level distance=1cm,sibling distance=0.8cm,minimum width=0.5cm}]
    draw/.style={text centered}
    \node [draw] (A){$f_2(f_1(\texttt{a,b,c,d}), f_1(\texttt{e},1,0,0), 0 , 0)$}
    child{node [draw] {$f_1(\texttt{a,b,c,d})$}
child{node [draw] {\texttt{a}} }
child{node [draw] {\texttt{b}} }
child{node [draw] {\texttt{c}} }
child{node [draw] {\texttt{d}} }
    }
child{node [draw] {$f_1(\texttt{e},1,0,0)$}
child{node [draw] {\texttt{e}} }
child{node [draw] {$1$} }
child{node [draw] {$0$} }
child{node [draw] {$0$} }
}                  ;
\end{tikzpicture}
}
\caption{\label{fig:pyramidal}Simplified tree-based algorithm  hashing the strings $\texttt{ab}$ and $\texttt{abcde}$ using hash functions $f_1,f_2,\ldots:X^4\to X$ }
\end{figure*}
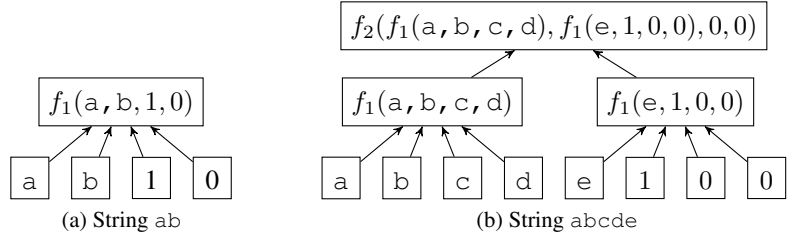


%

\begin{algorithm}
\caption{Tree-based algorithm}\label{alg:pyr}
\begin{algorithmic}[1]\small
\REQUIRE Set of integer values $X$ containing at least the values 0 and 1. \COMMENT{E.g., set of all 32-bit integers.}
\REQUIRE $L$ hash functions $f_1, f_2, \ldots, f_{L}$ of the form $X^m \to X$ for $m>1$,  picked independently from a family $\mathcal{H}$ that
is uniform and $\epsilon$-almost $\Delta$-universal.
\STATE \textbf{input}: string $s$ made of $N$~character values from $X$ with $1 \leq N \leq m^L-1$. \COMMENT{That is, $s \in X^N$ and $|s|=N$.}
\STATE Let $\sigma$ be the string of length $N+1$ that we get by appending the value 1 at the end of the string $s$. \COMMENT{We have that $|\sigma|\leq m^L$.}
\STATE $j \leftarrow 1$
\WHILE {$\sigma$ contains more than one character value ($|\sigma|>1$)}
\STATE while the length $|\sigma|$ is not a multiple of $m$, append a zero to $\sigma$.
\STATE $\sigma \leftarrow f_j(\sigma_1, \ldots, \sigma_m),f_j(\sigma_{m+1}, \ldots, \sigma_{2m}), \ldots,$ \\\hspace{1cm}$f_j(\sigma_{|\sigma|-m+1}, \ldots, \sigma_{|\sigma|})$
\STATE $j \leftarrow j+1$
\ENDWHILE
\RETURN the sole character value of $\sigma$ as the hash value of $s$
\end{algorithmic}
\end{algorithm}

%
%
%
%

To achieve almost $\Delta$-universality, it is only required that the last of the hash functions applied come from an almost $\Delta$-universal family. Thus it is possible to use families with weaker universalities (e.g., merely $\epsilon$-almost universal) as part of the tree-based construction, while still offering almost $\Delta$-universality in the end. However, with regularity, we cannot as easily substitute potentially weaker hash families: we require that all hash functions being composed be regular.

For clarity, we described Algorithm~\ref{alg:pyr} in such a way that the first level is computed entirely as a first step (using $f_1$), followed by a second pass at the second level (using $f_2$) and so on. This approach requires allocating dynamically a possibly large amount of memory. We compute the same result using a bounded and small amount of memory~\cite{Boesgaard2005}: no more than $m(L-1)$~values from $X$.  We first hash the first $m$~characters of the string that has been extended with an extra 1. The result is written at the first location in the second level. We repeat with the next $m$~elements. (Cases where we have fewer than $m$~characters left are also handled efficiently, avoiding copies and explicit zero-padding.) Once we have $m$~hash values stored in the second level, we hash them and store the result in the third level.
After each chunk of $m$~characters is hashed, we push its hash value
to a higher level.  Once we are done hashing the input, we complete the computation.

Almost all data objects in modern computing can be represented as a string of bytes (8-bit words) so we assume that we accept strings of bytes for complete generality. Yet on 32-bit or 64-bit processors, it is not always desirable to process the inputs byte-by-byte: it is more natural and faster to process the data using 32-bit or 64-bit machine words. So our set of characters $X$ is made of all 32-bit or all 64-bit values. When appending the string with a value of 1 as in Algorithm~\ref{alg:pyr}, we actually pad with a 1-byte and zeros to the nearest machine word boundary. In software, we avoid creating a new extended  string with padded bytes---as it would be inefficient. Instead we just compute the final machine word  and use an optimized code path.


As pointed out by Halevi and Hugo~\cite{MMH1997},
there is a downside to the tree-based approach:
the universality degrades linearly with the height of the tree.
We could solve this problem by hashing all but the last level of the tree to a larger domain (e.g., one of cardinality $L|X|$), as long as we could maintain regularity.
Or, instead, we could use a two-level approach where only the first level uses \textsc{Multilinear}, while the second level uses a polynomial hash family: VHASH described in Appendix~\ref{appendix:related} uses a similar approach~\cite{dai2007vhash}. We would need to ensure that we have good regularity in both levels.
However,
we can alleviate this degraded universality problem by using a tree of small height. That is, if we  choose the family $\mathcal{H}$ of hash functions $h:X^m\to X$ with a relatively large integer $m$, we may never require a tall tree (e.g., one with more than $\approx 8$~levels). In this manner,  the tree-based approach may still meet our goals by ensuring component-wise regularity while still achieving good universality.


\section{Universality and regularity  with \treename{}}

To implement Algorithm~\ref{alg:pyr}, we need to select
a family of hash functions $\mathcal{H}$.
The \textsc{Multilinear} family (see \S~\ref{sec:strong})
might fit our needs in mathematical terms since it is strongly universal:
$h(s)=a_0+\sum_{i=1}^m a_{i} s_i \bmod \, {p}$ for $p$ prime.
By picking keys $a_1, a_2, \ldots $ as integers in $[1,p)$,
we get a component-wise regular and almost universal family.
However, the resulting hash family depends
 crucially on the choice of a prime number $p$.
In related work, authors chose prime numbers smaller than a power of two~\cite{krovetz2007message,dai2007vhash,thorup2012tabulation}  such as Mersenne primes or pseudo-Mersenne primes (primes of the form ${2^n \allowbreak - \allowbreak  k}$ where $k$ is much smaller than $2^n$ in absolute value~\cite{Nussbaumer:1976:DFU:1664374.1664380}).
Such prime numbers enable fast modulo reduction algorithms.
For example, $p=2^{61}-1$ is a Mersenne prime.
Given a 64-bit integer $x$, we can compute $x \bmod \,p$ by first computing $(x \bmod \, 2^{61}) + (x \div 2^{61})$ and then subtracting $p$ from $x$ if $x$ exceeds $p$.

Of course, we do not hash strings of numbers in $[0,p)$ for $p$ prime, instead we hash strings of numbers in $[0,2^n)$.
Choosing $p<2^n$ is not a problem to get almost universality~\cite[Section~4]{krovetz2001fast}. However,
it makes it more difficult to achieve regularity.
To illustrate the problem, consider once more
the family $h(x) = a x \bmod \, p$ for $p$ prime and an integer $a$ picked randomly in $[1,p)$. This family is regular for inputs in $[0,p)$. Suppose however that $x \in [0,2^n)$ for $2^n >p$, then the result is at most 2-regular. Because  regularity degrades exponentially with composition, if we use a 2-regular function at each level in the a tree-based setting, the final result might only be $2^L$-regular for trees of height $L$. However, the problem goes away if we pick $2^n < p$, as  $h(x) = a x \bmod \, p$ is then regular once more.

Hence,
our selection of prime numbers ${p}$ is based on \emph{two} requirements: \begin{enumerate}
\item  for a number ${x}$ that fits a single processor word, ${x  \bmod \, p}$ should be equal to ${x}$, thus making it easy to achieve regularity,
 and
\item  reduction modulo ${p}$ of numbers that do not fit to a single word should be expressed in terms of computationally inexpensive operations. In practice, this may be achieved by choosing $p$ close to a power of two matching the processor word size (such as $2^{64}$).
\end{enumerate}
Thus, we use minimal primes that are greater than any number that fits in a single processor word, that is, for a 32-bit platform, ${p = 2^{32} + 15}$, and for 64-bit platform, ${p = 2^{64} + 13}$ (see Table~\ref{table:smallestprimes}).   We call primes  of a form ${2^n+k}$ where $k$ is small \emph{\textsc{pseudo}+\textsc{Mersenne} primes} by analogy with pseudo-Mersenne primes. Table~\ref{table:smallestprimes} gives several such
primes, e.g., $2^{64}+13$.

The idea of using \textsc{pseudo+Mersenne} primes for universal hashing is not new~\cite{Bernstein2005,MMH1997}.
\begin{itemize}
\item Our approach is similar to  Multidimensional-Modular-Hashing (MMH)~\cite{MMH1997}. The MMH authors use $p=2^{32}+15$ for $n=32$. They build their hash family on multilinear functions of the form
$h(s)=(\sum_{i=1}^m a_{i} s_i \bmod \, 2^{2n}) \bmod{p}$
(as opposed to
$h(s)=\sum_{i=1}^m a_{i} s_i  \bmod{p}$). That is, they use only two $n$-bit words to compute the sum although more than $2n$~bits are required (e.g., 3~words) to compute the exact sum.
 They prove that their speed
optimization only degrades the universality slightly (by a factor of 2). However, they also degrade the regularity. Because the regularity degrades exponentially with composition in the worst case (see \S~\ref{sec:regularity}), we prefer to avoid non-regular functions for a tree-based construction. Moreover,
we are able to produce fast code to compute the exact sum (see Appendix~\ref{appendix:opttech}). Since we benchmark our contributed functions against a family faster than MMH (VHASH~\cite{krovetz2007message,dai2007vhash}), we do not consider MMH further.
\item Our approach is also related to Bernstein's~\cite{Bernstein2005} cryptographic Poly1305 function that uses $p=2^{130}-5$ to generate 128-bit hash values.
Bernstein reports choosing $p=2^{130}-5$ instead of a value closer to $2^{128}$ for computational convenience. Though it is possible that larger primes than the ones we choose could allow further speed optimizations, it may also degrade the universality slightly.  Thus we do not consider the possibility further.
\end{itemize}

\begin{table}
\caption{\label{table:smallestprimes}Smallest primes larger than a power of two~\cite{A132198}}
\centering \begin{tabular}{cc}
\toprule
Power of two & Smallest prime  \\
\midrule
$2^8$ & $2^{8}+1$\\
$2^{16}$ & $2^{16}+1$\\
$2^{32}$ & $2^{32}+15$\\
$2^{64}$ & $2^{64}+13$\\
$2^{128}$ & $2^{128}+51$\\
\bottomrule
\end{tabular}
\end{table}

From this family of prime numbers, we define the  \nameofthescheme{} family of hash functions.

\begin{definition}
Let  $p$ be a prime,  $X=[0,p)$ and $m$ a positive integer. Let $2^n$ be the largest power of two smaller than $p$.
The \fullnameofthescheme{} family (or \nameofthescheme{}) is the set of functions from $X^m$ to $X$ of the form
\begin{align}\label{eq:2wlinear}f(s) \equiv f(s_1, \ldots,s_m) = \left ( b + \sum_{i=1}^m a_i s_i \right ) \bmod \, p\end{align} where $b$ is an integer in $[0,2^n)$   whereas
 $a_1, \ldots, a_m$ are non-zero integers in $(0,p-\kappa)$ for some integer $\kappa \geq 0$.
\end{definition}

%
%
%
Observe that integers subject to  additions and multiplications modulo $p$ for $p$ prime form
a \emph{finite field}  $\mathbb{F}_p$. Thus we have that $a x  \bmod \, p = a' x  \bmod \, p$ implies $a = a'$ unless $x= 0$ since $x$ is invertible (in $\mathbb{F}_p$).
 We have that \nameofthescheme{} is component-wise regular: we can solve
 the equation $f(s) = y$ for $s_i$ with exactly one value:
 $s_i = a_i^{-1}(y -  a_1 s_1 - a_2 s_2 - \cdots - a_{i-1} s_{i-1}- a_{i+1} s_{i+1} - a_m s_m )  $ in $\mathbb{F}_p$.
 That is, it suffices to require that the parameters  $a_1, \ldots, a_m$ are non-zero to get regularity.

 We can also show $\epsilon$-almost $\Delta$-universality as follows.
 Consider the equation
 \begin{align*}
 \left  ( b + \sum_{i=1}^m a_i s_i \right ) - \left  ( b + \sum_{i=1}^m a_i s'_i \right  ) \bmod \, p = y
 \end{align*}
 for some $y$ in $[0,p)$ for two distinct strings $s$ and $s'$. We have
 that $s_r \neq s'_r$ for some index $r$. Thus, fixing all other values,
 we can solve for exactly one value $a_r$ such that the equality holds. When picking the hash function at random, $a_r$ can have one of $p-\kappa-1$ different values (all integers in $[1,p-\kappa)$), thus the equation holds with probability at most $1/(p-1-\kappa)$.

 Similarly, we can show that  \nameofthescheme{} is $1/2^n$-almost uniform. Indeed, consider the equation
\begin{align*}
 \left  ( b + \sum_{i=1}^m a_i s_i \right ) \bmod \, p = y
 \end{align*}
  for some $y$ in $[0,p)$.
 Fixing the $a_i$'s, the $s_i$'s and $y$, there is exactly one value $b \in [0,p)$ solving this equation. Yet we have $2^n$ possible values for $b$, hence the result.

 We have the following lemma.
 \begin{lemma}
 The family \nameofthescheme{} is $1/(p-1-\kappa)$-almost $\Delta$-universal, $1/2^n$-almost uniform and component-wise  regular.
 \end{lemma}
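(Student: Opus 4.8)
The plan is to prove the three asserted properties one at a time, each by reducing to a single elementary fact about the finite field $\mathbb{F}_p$: a linear equation $\alpha z = \beta$ has exactly one solution $z$ whenever the coefficient $\alpha$ is non-zero (and hence invertible). In every case I would isolate one variable that enters the defining equation \eqref{eq:2wlinear} linearly, argue that once everything else is fixed this variable is uniquely determined, and then count the admissible values it could have taken.

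For component-wise regularity and for $1/2^n$-almost uniformity the isolated variable is an argument of $f$, respectively the distinguished component $s_i$ and the additive constant $b$. In the regularity case I would fix an index $i$, a target $y$, and arbitrary values for the remaining components; since $a_i$ is non-zero and thus invertible, $f(s)=y$ rearranges to $s_i = a_i^{-1}\bigl(y - b - \sum_{j\neq i} a_j s_j\bigr)$, a unique preimage. As $|X_i| = |Y| = p$ we have $\lceil |X_i|/|Y|\rceil = 1$, so the single preimage meets the bound. For uniformity I would instead fix $y$, a string $s$, and all the keys $a_i$, and note that $b$ occurs with coefficient $1$, so $b = y - \sum_i a_i s_i \bmod p$ is the unique value in $[0,p)$ satisfying $f(s)=y$; since $b$ is drawn uniformly from the $2^n$ values in $[0,2^n)$, the probability it equals this prescribed value is at most $1/2^n$.

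For $\Delta$-universality the isolated variable is instead a key. Given distinct strings $s,s'$ and any constant $y$, the additive term $b$ cancels in the difference, leaving $\sum_{i=1}^m a_i(s_i - s'_i) = y \bmod p$. Distinctness furnishes an index $r$ with $s_r - s'_r \neq 0$; conditioning on the remaining keys, this invertible coefficient determines $a_r$ uniquely. Since $a_r$ is chosen uniformly among the $p-\kappa-1$ integers in $[1,p-\kappa)$, the conditional probability is at most $1/(p-1-\kappa)$, and averaging over the other parameters preserves the bound.

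The arguments are short, so the main obstacle is purely bookkeeping. In each part the isolated variable ranges over a proper subset of $[0,p)$ --- the keys lie in $(0,p-\kappa)$ and $b$ in $[0,2^n)$ --- so I must confirm that restricting its range can only lower the relevant probability: if the unique field solution happens to fall outside the admissible set, that event simply never occurs, which is safe for an upper bound. The other point demanding care is that every step hinges on the isolated coefficient ($a_i$, $s_r-s'_r$, or $1$) being non-zero in $\mathbb{F}_p$, which is precisely why the definition requires the keys $a_i$ to be non-zero.
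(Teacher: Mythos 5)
Your proposal is correct and follows essentially the same route as the paper: solve the defining linear equation over $\mathbb{F}_p$ for the distinguished component $s_i$ (regularity), the offset $b$ (uniformity), or the key $a_r$ at an index where the two strings differ ($\Delta$-universality), and then count admissible values. Your added remark that the restricted ranges of $b$ and the $a_i$ can only lower the probabilities makes explicit a point the paper leaves implicit, but the substance is identical.
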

 Though it may seem that the parameter $\kappa$ is superfluous as setting $\kappa=0$ optimizes universality, we shall see that restricting the range of values with $\kappa > 0$ can ease computations. Similarly, it may seem wasteful to pick $b\in [0, 2^n)$ instead of picking it in $[0,p)$, but this is again done for computational convenience.

In what follows, we  call  \treename{}
 the use Algorithm~\ref{alg:pyr} with the hash family \nameofthescheme{}.
The result is a hash family that is $L/(p-1-\kappa)$-almost  $\Delta$-universal, uniform and component-wise regular over strings of length up to $m^L-1$.

Naturally, we want the resulting hash values to fit in a more convenient range than $[0,p)$. So, we
compute $h(s) \bmod \, 2^n$. We  have that $\left \lceil \frac{p}{2^n} \right \rceil = 2$.
As per Lemmas~\ref{lemma:deltaismarvellous} and~\ref{lemma:regularismarvellous}, the result is $3L/(p-1-\kappa)$-almost $\Delta$-universal and $2$-regular.\footnote{Since
$\frac{2p-1}{2^n}<3$.}


%
%
%
%

\section{An efficient implementation of \treename{}}
\label{ref:efficient}

The functions in the \nameofthescheme{} family make use of a modulo operation.
On most modern processors, division and modulo operations are computationally expensive in comparison to addition, or even multiplication.
Thankfully, equation~\ref{eq:2wlinear} suggests a single modulo operation after a series of multiplications and additions that reduces the number of modulo operations to one per $m$~multiplications~\cite{MMH1997,krovetz2001fast}, where $m$ is a parameter of our family.
We can furthermore tune \treename{}  by optimizing the scalar product computations   (see \S~\ref{sec:fastmul}) and replacing expensive modulo operation by a specialized routine (see \S~\ref{sec:efficientreduction}).

\subsection{Scalar product computation}
\label{sec:fastmul}
Our data inputs are strings of $n$-bit characters.
Two cases are important: $2^n=2^{32}$ (particularly for 32-bit architectures) and $2^n=2^{64}$ (mostly for 64-bit architectures). Where applicable we refer to them separately as  \nameoftheschemesingle{}  (or \treenamesingle{} in the tree-based version) and  \nameoftheschemedbl{} (or \treenamedbl{} in the tree-based version).

Consider the scalar product between keys and components $\sum_i a_i s_i$. Recall that we pick the values $a_i$ in $(0,p-\kappa)$. As long as we choose $\kappa$ large enough so that $p-\kappa \leq 2^n$, we have that $a_i$ is also a machine-sized word (e.g., 64~bits on a 64-bit platform).
For long data segments, we expect most of the running time to be due to the first level of the tree, and mostly due to the computation of the sum  $\sum_{i=1}^m a_i s_i$.
We describe our fast implementation of such computations on modern superscalar processors in Appendix~\ref{appendix:opttech}.

Though we hash strings of machine-sized words (so that $s_i\in [0,2^n)$), in a tree-based setting (see Algorithm~\ref{alg:pyr}), we can no longer assume that $s_i$ fits in a single word---beyond the first level. Two words are required in general. The $s_i$'s are in $[0,2^{32}+15)$ for \treenamesingle{} and in $[0,2^{ 64}+13)$ for \treenamedbl{} at all but the first level in Algorithm~\ref{alg:pyr}.
(We could reduce the hash values so that they fit in a single word, but it would degrade the regularity and universality of the result.)
For speed and convenience, we still want the result of the multiplication to fit in two words.  That is, we want that $a_i s_i \in [0,2^{2n})$ or, more specifically,
\begin{align*}(p - \kappa) (p-1) < 2^{2n}.\end{align*} For this purpose, we set $\kappa = 24$ for \treenamedbl{}. That is, we pick the $a_i$'s in $(0,2^{64}+13-24)=(0,2^{64}-11)$. For \treenamesingle{}, we set $\kappa = 28$ and pick $a_i$'s in $(0,2^{32}+15-28)=(0,2^{64}-13)$. See Table~\ref{table:parameters}
for the parameters and Table~\ref{table:properties} for the properties of the resulting hash families.

For both the
\treenamesingle{} and \treenamedbl{} cases, we use a maximum of 8~levels ($L=8$). Yet we are unlikely to use that many levels in practice: e.g., if we assume that inputs fit in four gigabytes, then 4~levels are sufficient.

\begin{table}
\caption{\label{table:parameters}Parameters used by \treename{}}
\centering \begin{tabular}{ccccc|c}
\toprule
Word size &$2^n$ & $p$  & $\kappa$ & $m$ & $L$  \\
\midrule
32 bits & $2^{32}$& $2^{32}+15$  & 28 & $128$ & 8 \\
64 bits & $2^{64}$& $2^{64}+13$  & 24 & $128$ & 8 \\
\bottomrule
\end{tabular}
\end{table}

\begin{table*}
\caption{\label{table:properties}Properties of Algorithm~\ref{alg:pyr} applied with \nameofthescheme{} in a tree-based setting (\treename{}). String lengths are expressed in machine words (32~bits or 64~bits).  }
\centering \scriptsize \begin{tabular}{cccccc}
\toprule
Name &Word size & Hash interval & max.\ string length & universality & regularity\\
\midrule
\treenamesingle{}& 32 bits &$[0,2^{32})$ &  $(2^{56}-1)$ words& $\frac{12}{2^{31}-7}$-A$\Delta$U& component-wise 2-regular\\
\\
\treenamedbl{}&64 bits & $[0,2^{64})$ &  $(2^{56}-1)$ words& $\frac{12}{2^{63}-6}$-A$\Delta$U& component-wise 2-regular\\
\bottomrule
\end{tabular}
\end{table*}

\subsection{Efficient modulo reduction}
\label{sec:efficientreduction}
We have insured that the result of our multiplications fit in two words. Halevi and Krawczyk~\cite[Section~3.1]{MMH1997} have derived an efficient modulo reduction in such cases. However, the sum of our multiplications requires more than two words since, unlike Halevi and Krawczyk, we compute an exact sum.
Thus we need to derive an efficient routine to apply the modulo operation on three input words.
We have that \begin{align*}S=\left (b + \sum_{i=1}^m a_i s_i \right )\leq (2^n -1) + m (2^{2n}-1)\end{align*}
Because we choose $m=128$, we have that the result fits  into three words ${(w_0, w_1, w_2)}$ (either 32-bit or 64-bit words) with a small value stored in the most significant word ${w_2}$ (no larger than $m$).

The modulo reduction uses the equalities (modulo $2^n+k$): ${ 2^n  = -k}$ and ${( 2^n)^2 = (-k)^2 = k^2}$.
In our case, $k=15$ or $k=13$ depending on whether we use a 32-bit or 64-bit platform.
 Then, for any number $S$ that fits in 3~words $w_0,w_1,w_2$ (i.e., it is smaller than $2^{3n}$ where $n=32$ or $n=64$), we have
\begin{align*} S & \equiv  w_0 + w_1 \times 2^n + w_2 \times (2^n)^2 \\  &\equiv   w_0 - k \times w_1 + k^2 \times w_2\end{align*}
modulo $2^n+k$.
Let $u_0 = (k\times w_1 ) \bmod{2^n}$ and $u_1 = (k\times w_1) \div 2^n$, then $k\times w_1 = u_1\times 2^n + u_0$. By substitution, we further obtain
\begin{align*}S & \equiv    w_0 + k^2 \times  w_2 - 2^n \times  u_1  - u_0\\ & \equiv   w_0 + k^2 \times  w_2  + k \times  u_1 - u_0
\\&\equiv ( w_0 + k^2 \times  w_2 + k \times  u_1) + ( 2^n + k - u_0)
\end{align*} modulo $2^n+k$.

We have thus reduced $S$ to a number smaller than $2^{2n}$ (modulo $2^n+k$) which fits in two $n$-bit words. We can therefore write $S\equiv v_0 + 2^n v_1 (\bmod \, {2^n + k})$ where $v_0, v_1 \in [0,2^n)$ are easily computed.

We can also bound our representation of $S$ as follows:
\begin{itemize}
\item $w_0 \leq 2^{n}-1$;
\item $k^2 w_2 \leq k^2 m \leq 15^2 \times 128 = 28800$
\item $k \times u_1  = k \times ( k\times w_1 \div 2^n)
\leq k \times (( k\times 2^{n}-1) \div 2^n)
\leq k (k-1) \leq 210 $;
\item $2^n +k -u_0 \leq 2^n +k -1 \leq 2^n +14$.
\end{itemize}
Thus we have a bound of $2\times 2^n + 29023$. It follows that $v_1 \leq 2$.

To reduce $S$ to a number in $[0,p)$ requires branching (see Algorithm~\ref{alg:reduction}). The algorithm works as follows:
\begin{itemize}
\item If $k \times  v_1 \leq v_0$, we exploit the fact that, modulo $2^n+k$, we have $2^n \times v_1 = -k \times  v_1$ to return $ v_0 - k \times  v_1$.

To accelerate this case in software, we can use the fact that $v_0 \geq 2k \Rightarrow k \times  v_1 \leq v_0$. Since $v_0 \geq 2k$ is common and faster than checking that   $k \times  v_1 \leq v_0$, it is worth introducing an extra branch.
\item  If $k \times  v_1 > v_0$, then we know that $v_1>0$. If $v_1=1$, then $v_0<k$ and we can return $v_0 +2^n$ without any reduction. Otherwise we have that $v_1=2$. In such a case, we use the fact that $2\times 2^n = 2^n -k \bmod{(2^n +k)}$
to write $  v_0 + 2\times 2^n$ as $2^n -k  + v_0$. This value is  smaller than $2^n +k$ since  $v_0 < 2k$.
\end{itemize}

\begin{algorithm}
\caption{Reduction algorithm: find the integer $z \in [0,2^n+k)$ such that  $v_1 +2^n v_2 = z \bmod\, {2^n+k}$.
}\label{alg:reduction}
\begin{algorithmic}[1]\small
\STATE \textbf{input}: an integer $v_1\in [0,2^n)$ and an integer $v_2 \in \{0,1,2\}$ \COMMENT{Represents $v_1 +2^n v_2$.}
\IF{$k \times  v_1 \leq v_0$}
\RETURN $ v_0 - k \times  v_1$ \hfill\COMMENT{Can use $v_0 \geq 2k \Rightarrow k \times  v_1 \leq v_0$ to accelerate the check}
\ENDIF
\IF{$v_1=1$}
\RETURN $v_0+2^n $
\ENDIF
\RETURN $v_0 - k$ \hfill\COMMENT{$v_1=2$ in this case}
\end{algorithmic}
\end{algorithm}

\section{Achieving the avalanche effect}
\label{sec:improvstats}

It is often viewed as desirable
that a small change in the input should lead to a large change in the hash value. For example, we often check whether hash functions satisfy the \emph{avalanche effect}: changing a single bit of the input should flip roughly half the bits of the output~\cite{estebanez2006evolving}.

To improve our hash functions in such respect, we add an extra step to further \emph{mix} the output bits.
We borrowed these procedures from MurmurHash~\cite{smhasher}. For \treenamedbl{} the step in C is
\begin{lstlisting}
 z = z ^ (z >> 33);
 z = z * 0xc4ceb9fe1a85ec53;
 z = z ^ (z >> 33);
\end{lstlisting}
and for \treenamesingle{} it is
\begin{lstlisting}
 z = z ^ (z >> 13);
 z = z * 0xab3be54f;
 z = z ^ (z >> 16);
\end{lstlisting}
where $\xor{}$ is the bitwise exclusive or and $z$ is an unsigned integer of a respective size (32 bit for \treenamesingle{} and 64 bit for \treenamedbl{}).
These transformations are invertible for all integers that fit a single word and, therefore, they do not affect universality and regularity.

\section{Experiments}

We implemented \treename{} for the x64, x86 and ARM platforms in C++.
On the x86 platform, we use the SSE2 instruction set for best speed.
We make our software freely available under an open source license.\footnote{\url{http://sourceforge.net/projects/hasher2/}}

To test the practical fitness of the \treename{} schema,  we chose the SMHasher~\cite{smhasher} framework. It provides a variety of performance tests as well as several statistical tests.
For comparison purposes, we  used the same framework to test other hashes commonly used in industry.
\begin{enumerate}
\item The hash functions used in the C++ standard (\texttt{std}) library.
\item The hash functions  used in the Boost library, a widely used C++ library.
\item  MurmurHash~3A for 32-bit platforms and MurmurHash~3F for 64-bit platforms: a popular family of hash functions used by major projects such Apache Hadoop and Apache Cassandra.
\item VHASH~\cite{dai2007vhash,krovetz2007message} (see Appendix~\ref{appendix:related}), one of the fastest hash families on 64-bit processors.
\item SipHash~\cite{aumasson2012siphash}: the 64-bit family  hash functions used by the Python language.
\end{enumerate}
Of course, there are many more fast hash functions (e.g., xxHash, CityHash~\cite{cityhash,So:2012:TFN:2396556.2396575}, SpookyHash~\cite{spooky}, FarmHash~\cite{farmhash}, \textsc{CLHASH}~\cite{clhashing} and tabulation-based or Zobrist hashing~\cite{zobrist1970, zobrist1990new,thorup2012tabulation}).
For a recent review of non-cryptographic hash functions, we refer the interested reader to  Ahmad and Younis~\cite{Ahmad2014197},
Estébanez et al.~\cite{SPE:SPE2179} or Thorup~\cite{thorup2015high}.
We leave a more detailed comparison to future work.

We performed extra steps to ensure these functions work inside the SMHasher testing environment:

\begin{description}
\item[Standard library] To hash an arbitrary length data segment we used the library function \texttt{hash\allowbreak{}$<$string$>$}: it takes a \texttt{string} object as a single parameter. The C++11 standard does not specify the implementation so it is  vendor and even version specific. In practice, the hash value generated occupies 32~bits on 32-bit platforms and 64~bits on 64-bit platforms.

In the context of the SMHasher testing environment, we must first create a string object based on the  data segment and its length to use this function.
To exclude the time used for the creation of the \texttt{std::string} object, we  added a separate method that does just the creation of the object itself, and nothing else. Hence, we were able to  estimate the time required to create the object and deduct it from  the whole processing time. We have observed that time spent on object preparation was roughly 10\,\% of the total processing time.

\item[Boost] Boost is a well regarded C++ library and it is likely that its hash functions are in common use. We  tested the  \texttt{hash\_range( char*, char*)} function from version 1.5 of the Boost library. Like the standard library, the hash value generated occupies 32~bits on 32-bit platforms and 64~bits on 64-bit platforms.

\item[VHASH] We chose to compare against VHASH because it is one of the fastest
families of hash functions on longer data sets: e.g., it is several times faster than high performance alternatives such as Poly1305~\cite{krovetz2007message}. It is faster than UMAC~\cite{krovetz2007message} which has itself found to be twice as fast as MMH~\cite{703969}.
We used the most recent VHASH implementation made available by its authors~\cite{vhashimpl}.
It generates 32-bit hash values on 32-bit platforms whereas it generates 64-bit hash values on 64-bit platforms.
On 64-bit platforms, it  pads data with zeros if the input size is not a multiple of 16~bytes. Such padding results in copying  up to 127~bytes to an intermediate buffer. This operation is done at most once per data segment, and, therefore, affects only relatively short segments. We believe that certain changes in the base implementation might potentially be more efficient than our approach with copying; correspondingly, we calculate and present optimistic estimates that do not include time spent on additional processing (similar to our approach with the standard library).
We proceed similarly on the 32-bit ARM platform.

For the Intel 32-bit platform (x86), the authors' implementation~\cite{vhashimpl} provides two options: one uses SSE2 instructions and another one is in pure C. The performance of the SSE2 implementation  is more than two times higher, so the SSE2 option is used for testing.
This particular SSE2 implementation does not require a particular memory alignment. However, it also assumes that data is processed in blocks of 16~bytes, so we use buffering as in the 64-bit platform.

\end{description}

\subsection{Variable-length results on recent Intel processors}
\label{sec:modern}
In Fig.~\ref{fig:rawspeed}, we compare directly the speeds in bytes per CPU cycle of our hash families over random strings of various lengths. We use a recent Intel processor with the recent Haswell microarchitecture: an Intel i7-4770 processor running at 3.4\,GHz. This processor has  32\,kB of L1 cache per core, 256\,kB of L2 cache per core and 8\,MB of L3 cache. The software was compiled with GNU GCC 4.8 to a 64-bit Linux executable.

In this test, the 64-bit VHASH is capable of hashing 3.9~input bytes per cycle for long strings (4\,kB or more).
\treenamedbl{} is 15\,\% slower on such long strings at 3.3~bytes per cycle.

Our \treenamesingle{} can be 40\,\% faster than \treenamedbl{}, reaching speeds of 4.7~bytes per cycle on long strings. Thus, if we only need 32-bit hash values, it could be preferable to use \treenamesingle{}.
 The speed of MurmurHash~3A is disappointing at 0.8~bytes per cycle, whereas MurmurHash~3F does better at 2~bytes per cycle on long strings. SipHash reaches a speed of 0.5~bytes per cycle. The Boost and std hash functions are slower on long strings (less than 0.25~bytes per cycle).

On short strings, \treenamedbl{} and \treenamesingle{} are fastest followed by MurmurHash~3F.

\begin{figure*}\centering

\subfloat[64 or more bit outputs] {%
\includegraphics[width=0.8\textwidth]{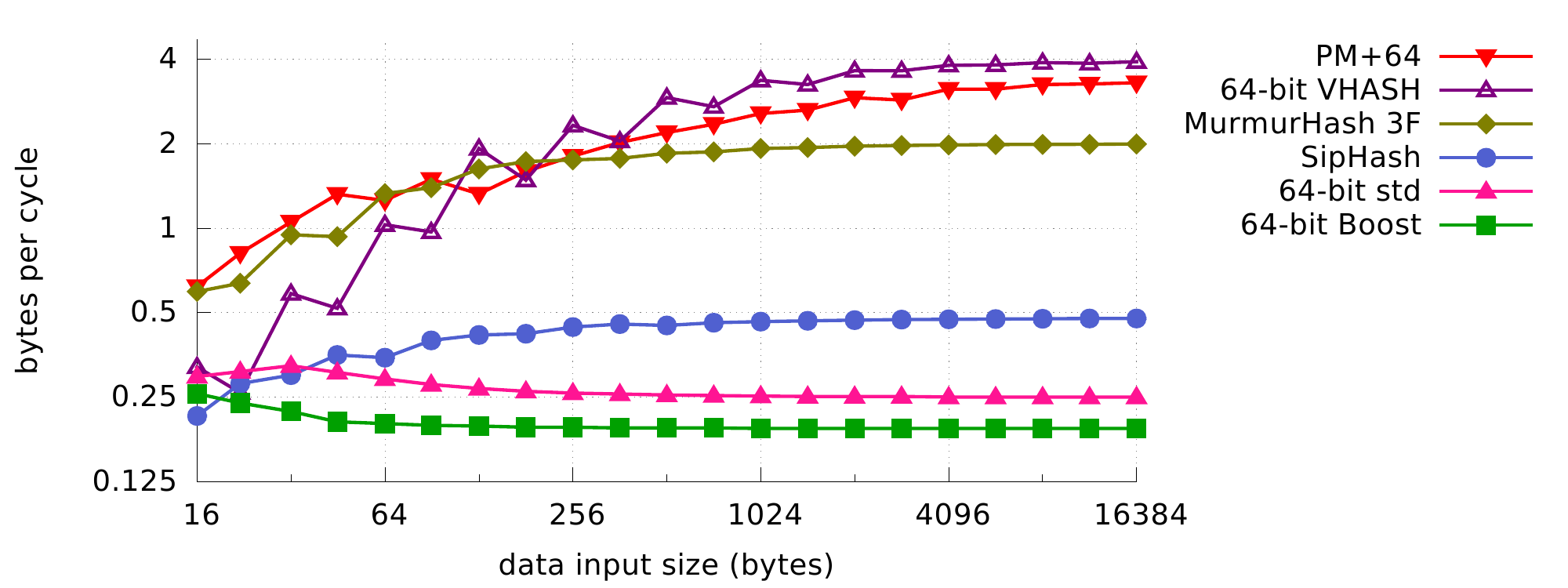}
}

\subfloat[32-bit outputs] {%
\includegraphics[width=0.8\textwidth]{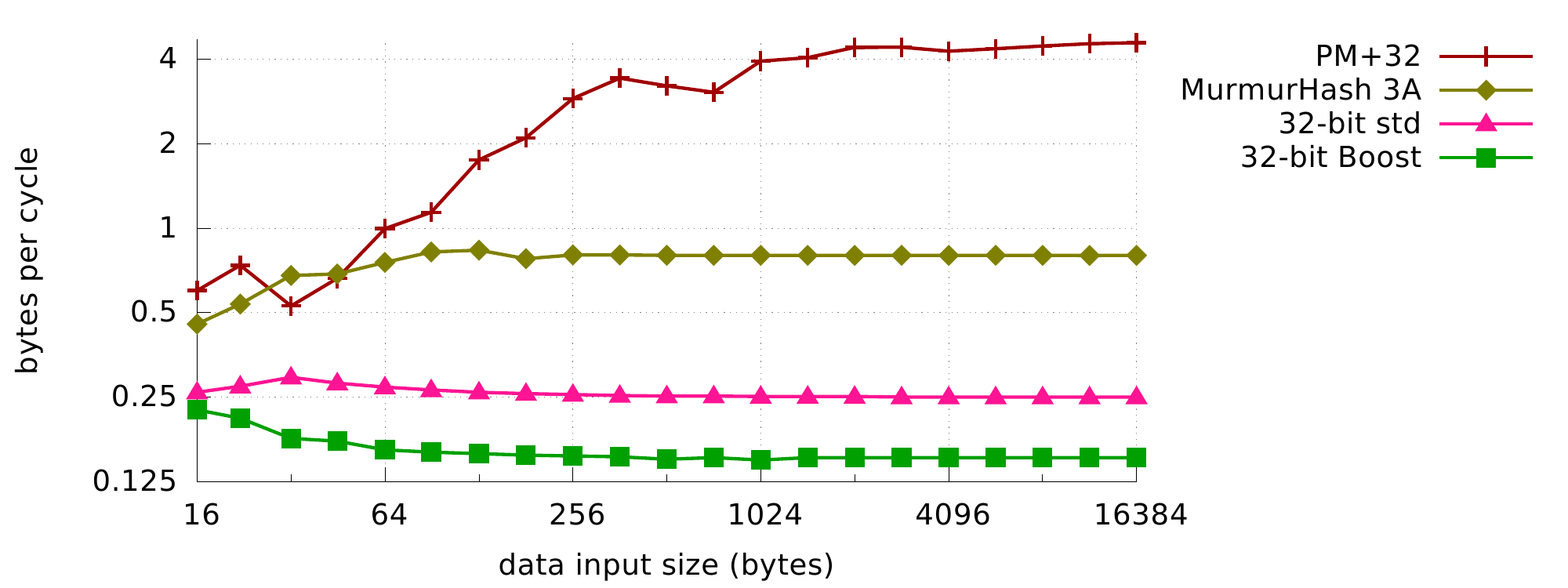}
}
\caption{\label{fig:rawspeed}Speed of the hash functions on random strings of various lengths on the recent Intel  Haswell microarchitecture.}
\end{figure*}

\subsection{Multiplatform performance testing}
\subsubsection{Methodology}

We compare the time used by all  hashing methods using \texttt{std::hash} as a reference (setting \texttt{std::hash} to 1.0).
Each single test is characterized by three ``dimensions'': \begin{inparaenum}[(1)]
\item platform and compiler; \item data; and \item  physical machine.
\end{inparaenum}
 We have reduced our analysis to the first two ``dimensions'' as follows: given a platform and data, tests were done on some number of physical machines, and  respective normalized timings were averaged.


\subsubsection{Platforms, compilers, hardware, sample data}

Information about platforms/compilers used for our tests is gathered in  Appendix~\ref{appendix:platformsused}.

Performance tests were done for short (1--31~bytes) and long (256~kB) data segments.
 Results for both short and long data segments were averaged;
 all results were finally represented as ratios to the
default \texttt{std::hash} function. For these tests, data segments were provided by the SMHasher testing framework.
 None of the methods is designed or optimized for a specific type of data, such as, for instance, text, and, therefore, none of the methods is put in explicit (dis)advantage by such data generation.

We expect our results to be independent from the number of CPU cores since none of our techniques are parallelized. Moreover, we also expect the RAM type to be insignificant since even our large segments fit in L3 processor cache.

\subsubsection{Multiplatform performance results}

\begin{table*}
\centering
\caption{Relative time taken to hash (1--31~bytes) and long (256~kB)  segments (standard library $=1$). Best results are in bold.
}
\label{tab:speed}
\centering
\begin{threeparttable}[b]
\begin{tabular}{p{0.65in}p{0.55in}p{0.55in}p{0.55in}p{0.55in}p{0.55in}p{0.55in}}
\toprule
& Boost & Murmur-3\tnote{a} & SipHash & VHASH & \treenamesingle{} & \treenamedbl{}\tnote{b} \\\midrule

\multicolumn{5}{l}{\textit{x86, GCC}} \\
short &  1.03 & \textbf{0.50} & 3.94 & 2.64\tnote{c}  & 0.59 &   \\
long &  1.60 & 0.31 & 2.75  & 0.27  & \textbf{0.11} &   \\[3ex]

\multicolumn{5}{l}{\textit{x86, MSVC}} \\
short &  1.23 & \textbf{0.63} & 5.70  & 4.28\tnote{c}   & 0.76 &  \\
long &  1.11 & 0.30 & 2.39  & 0.28  & \textbf{0.10} &  \\[3ex]

\multicolumn{5}{l}{\textit{x64, GCC}} \\
short &  1.05 & 0.57 & 1.27  & 1.19\tnote{c} & \textbf{0.50}   & 0.53 \\
long &  1.38 & 0.15 & 0.73  &\textbf{0.08} & 0.10  & 0.09 \\[3ex]

\multicolumn{5}{l}{\textit{x64, MSVC}} \\
short &  1.37 & 0.77 & 1.90  & 2.02\tnote{c} & 0.69  & \textbf{0.65} \\
long &  1.40 & 0.13 & 0.67  & \textbf{0.10} & \textbf{0.09}  & \textbf{0.09} \\[3ex]

\multicolumn{5}{l}{\textit{ARMv7}} \\
short &  0.89 & \textbf{0.86} & --\tnote{d}  & 1.22\tnote{c}  & 0.87 &  \\
long &  1.29 & 0.76 & --\tnote{d}  & 0.81  & \textbf{0.49} &  \\[3ex]
\bottomrule
\end{tabular}
\begin{tablenotes}
\item [a] MurmurHash~3A for 32-bit platforms and MurmurHash~3F for 64-bit platforms. As suggested by a comment in the MurmurHash3 code~\cite{murmurhashcode}, MurmurHash~3A is best on 32-bit platforms, and MurmurHash~3F is best on 64-bit platforms (and this has been confirmed in our tests).
\item [b]  \treenamedbl{} is implemented for  64-bit platforms only
\item [c] Optimistic estimate as described above; measured values were 30--40\,\% higher.
\item [d] Not tested

\end{tablenotes}
\end{threeparttable}

\end{table*}
Results of performance testing are gathered in Table~\ref{tab:speed}.
On x86 and x64 platforms time was measured in CPU clocks (\texttt{rdtsc}); and on ARM time values were collected in microseconds. While averaging over different physical machines, the greatest relative standard deviation among all entries except SipHash was 22\,\% (deviation of SipHash was up to 38\,\%), and for over 90\,\% of entries this value was less than 15\,\%. The relative standard deviations are sufficiently small to view the presented averages as representative and to provide some assurance that relative performance results of our algorithms can be expected on a variety of platforms.
The results are summarized in Fig.~\ref{fig:per_summary}.

\begin{figure*}
\centering\includegraphics[width=0.8\textwidth]{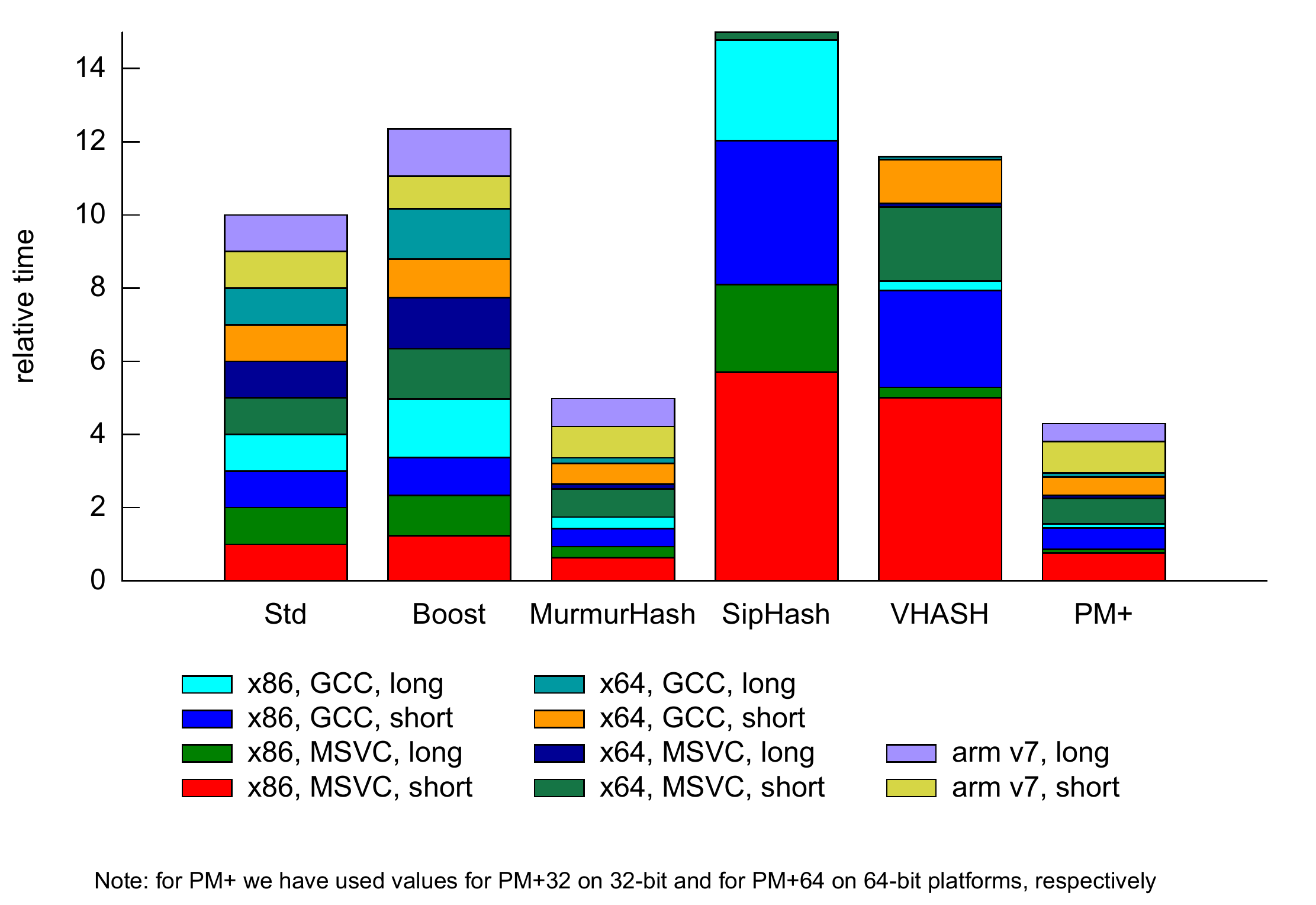}
\caption{\label{fig:per_summary}Performance  summary
}
\end{figure*}

Our results show that the hash functions in the standard library can be slow on long segments: MurmurHash,  VHASH and \treename{} can be ten times faster. But even on short segments, \treename{} can be twice as fast as the standard library (on x64 platforms).

The VHASH implementation is only competitive on long segments on x64. We are not surprised: it was designed specifically for 64-bit processors. On the x64-GCC platform, VHASH can be up to about 30\,\% faster than \treename{}. (The precise averages on long segments for VHASH and \treename{} are  0.081 and 0.106.) This is consistent with earlier
findings~\cite{Lemire10072013} (see Appendix~\ref{appendix:related}): VHASH is based on a function (NH) that is computationally inexpensive compared with \textsc{Multilinear}---at the expense of regularity.

\treename{} fares well on the ARM platform:  \treename{} is at least 50\,\% faster than the alternatives on long segments.

\treename{} is faster than MurmurHash~3 on  x64 platforms.
MurmurHash~3 is only significantly faster (20\,\%) than \treename{} on short segments on the x86-MSVC platform.

 \treenamesingle{} and  \treenamedbl{} have, in average, similar performance on 64-bit platforms. A closer examination reveals that  \treenamesingle{} is faster than \treenamedbl{} on recent processors supporting AVX2 instruction set (as reported in \S~\ref{sec:modern}) while it is slower on older processors without support for  AVX2.

\section{Conclusion}

We have described methods for constructing almost-universal  hash function
families for data strings of variable length.
Our hash functions are suitable for use
in common data structures such as hash tables. They offer strong  theoretical guarantees against denial-of-service attacks:
\begin{itemize}
\item We have almost universality: given two distinct data objects chosen by an adversary, the probability that they have the same hash value, that is, the probability that they collide, is very low given that we pick the hash functions at random. Our families have lower collision bounds than the state-of-the-art VHASH.
\item  We have shown that these hash functions are regular and component-wise regular, that is, they make an even use of all possible hash values. In doing so, they minimize the collision probability between two data objects selected at random.
  Competitive alternatives such as VHASH are not regular which is a possible security risk~\cite{Handschuh2008,Saarinen2012}.
\end{itemize}

Further, we have shown that an implementation of these non-cryptographic hash functions offered competitive speed (as fast as MurmurHash), and  were substantially faster than the implementations found in C++ standard libraries.
 Our  approach is similar to previous work on fast universal hash families (e.g., MMH~\cite{MMH1997},  \textsc{CLHASH}~\cite{clhashing}, UMAC~\cite{703969}, VHASH~\cite{krovetz2007message}
  and Poly1305~\cite{Bernstein2005}), except that we get good regularity in addition to the high speed and universality.
To promote the use of our hash functions among practitioners and researchers, our implementation is freely available as open source software.

In the future, it may be interesting to analyze the regularity of other universal hash families~\cite{Bernstein2005,MMH1997,krovetz2001fast,clhashing}, possibly improving it when possible.
We could also seek faster families of hash functions that  are both almost universal and regular.

%
%

\section{Acknowledgements}

We thank Ivan Kravets for his help with our testing framework, and more specifically with the ARM processors.
\bibliographystyle{wileyj}

\bibliography{hashing}

\appendix

\section{Platforms used}
\label{appendix:platformsused}

The platforms/compilers that we have used for testing are: x86/x64 with Microsoft Visual C++ 2013 compiler; x86/x64  with GCC compiler
 (version: 4.8); and ARMv7 with the Android NDK (revision 9c, December 2013) which uses the GCC compiler internally. For more details see Table~\ref{tbl:platforms}.

\begin{table*}\caption{\label{tbl:platforms}Platforms used.
}\centering
\begin{threeparttable}[b]
\begin{tabular}{p{0.8in}p{1.4in}p{0.4in}p{0.9in}p{1.2in}}
\toprule
Name & Processor & Bits & Compiler & Flags/Configuration\\ \midrule
x64, GCC & AMD, Intel\tnote{a}      & 64 &GNU GCC 4.8       & -O2 -march=x86-64\\
x64, MSVC & Intel Core i7\tnote{b}      & 64 &MSVS 2013          & Release\\

x86, GCC & AMD, Intel\tnote{a}      & 32 &GNU GCC 4.8      & -O2 -march=i686\\
x86, MSVC & Intel Core i7\tnote{b}      & 32 &MSVS 2013          & Release\\

ARMv7 & ARM Cortex/Krait\tnote{c}    & 32 &GNU GCC 4.6\tnote{d}   & Release\\
\bottomrule
\end{tabular}
\begin{tablenotes}
\item [a] Results have been averaged over
AMD FX-8150 Eight-Core (Bulldozer, Desktop),
Intel Core i7 620M (Westmere, Mobile),
Intel Xeon E5-2630 (Sandy Bridge, Server),
Intel Core i5-3230M (Ivy Bridge, Mobile), and
Intel Core i7-4770 (Haswell, Desktop)
with a maximum relative standard deviation of 0.20.
\item [b] Results have been averaged over
Intel Core i7-2820QM (Sandy Bridge, Mobile),
Intel Core i7-3667U (Ivy Bridge, Ultra-low power),
Intel Core i7-3770 (Ivy Bridge, Desktop),
Intel Core i7-4960X (Ivy Bridge, Extreme edition), and
Intel Core i7-4700MQ (Haswell, Mobile)
with a maximum relative standard deviation of 0.12.
\item [c] Results have been averaged over
Exynos 3110 (Cortex A8),
Qualcomm Snapdragon MSM8255 (Scorpion),
dual-core Exynos 4210 (Cortex-A9),
dual-core Exynos 4412 (Cortex-A9), and
quad-core Qualcomm Snapdragon 600 (Krait 300)
with a maximum relative standard deviation of 0.21.
\item [d] From the Android NDK, revision r9d.
\end{tablenotes}
\end{threeparttable}
\end{table*}

\section{Optimization techniques for calculating scalar products on x86 and x64 processors}
\label{appendix:opttech}

As  mentioned in \S~\ref{sec:fastmul}, it is important to optimize the computation of the  scalar product.
Overall, for the computation of the scalar product on x86 processors, we found best to use vectorization in the 32-bit case presented in  \S~\ref{sec:sseopt}. In the 64-bit case, we present a thoroughly optimized use of conventional instructions  in~\S~\ref{sec:intra}.

\subsection{Vectorizing the Computation of the Scalar Product}
\label{sec:sseopt}

We can implement a scalar product over pairs of 32-bit integers using one multiplication per pair, as well as additions with carry bit (e.g., the \texttt{adc}  x86 instruction) to generate the resulting 3-word (96-bit) result.

To achieve better speed, we use the fact that modern CPUs support vector computations through Single Instruction on Multiple Data (SIMD) instructions.  For instance,  the x86 architecture has  Streaming SIMD Extensions (SSE) using 128-bit registers and the more recent Advanced Vector Extensions (AVX) using wider 256-bit registers.

Our fastest 32-bit scalar production implementation for recent Intel processors uses AVX2. AVX2 has a \texttt{vpmuludq} instruction (corresponding to the \texttt{\_mm256\_mul\_epu32} Intel intrinsic) that can multiply four pairs of 32-bit integers, thus generating four 64-bit integers.

\subsection{Faster sums using two sets of accumulators}
\label{sec:intra}

The standard instruction set may provide better  for 64-bit outputs. That is, we can multiply two 64-bit integers, and then add the 128-bit result to three 64-bit words (representing a 192-bit sum) using a sequence of x64 instructions: \texttt{mulq} (multiplication), \texttt{addq} (addition), and two \texttt{adcq} (add with a carry bit).
The steps can be described as follows:
\begin{enumerate}
\item We use three 64-bit registers as accumulators $c_1, c_2, c_3$ representing the total sum as a $3\times 64=192$-bit integer. The registers are initialized with zeros.
\item For each input pair of 64-bit values, the \texttt{mulq} instruction multiplies them and stores the results in two 64-bit registers ($a$ and $d$). One register ($a$) contains the least significant 64~bits of the product, and the other ($d$) the most significant 64~bits.
\item The first accumulator, corresponding to the least significant 64~bits, is easily updated with a simple addition (\texttt{addq}): $c_1 = (c_1 + a) \bmod\, 2^{64}$. If the sum exceeds $2^{64}-1$, the carry bit $b$ is set to 1. That is, we have that $b=  (c_1 + a) \div 2^{64}$. Then we update the second accumulator using the add-with-carry instruction (\texttt{adcq}): $c_2= (c_2 +d+b) \bmod \, 2^{64}$. We also update the third accumulator similarly.
\end{enumerate}
This approach is efficient:  we only use 4~arithmetic x64 instructions per input pair. Yet, maybe surprisingly,   there is still room for optimization.

If their operands and output values are independent, modern processors may perform more than a single instruction at a time. Indeed, recent Intel processors can retire 4~instructions (or fused $\mu{}$ops) per cycle.
 We reviewed  the initial version of our code  with the IACA code analyzer~\cite{intelIACA} for the most recent Intel microarchitecture (Haswell). IACA revealed that  the throughput was  limited by data dependencies. Though the processor can execute one multiplication per cycle, it may sometimes have to wait for the accumulators to be updated.
 Thus we rewrote our code to use two sets of accumulators. Effectively, we sum the odd terms and the even terms separately ($\sum_{i=1}^{m/2} a_{2i} s_{2i}$ and $\sum_{i=1}^{m/2} a_{2i+1} s_{2i+1}$) and then we combine them. Respective code samples can be found in   Appendix~\ref{appendix:scalar64asm} (in x64 assembly) and Appendix~\ref{appendix:scalar64cpp} (in C++ with Intel intrinsics). A new analysis with IACA reveals that the throughput of this new code is then limited by the frontend of the processor (responsible for instruction decoding). On long strings, using a recent Haswell processor (Intel i7-4770 running at 3.4\,GHz), we went from $\approx 1150$~million input pairs per second to $\approx 1350$~million input pairs per second (an 18\,\% gain).

\section{Code sample to sum 64-bit products of 32-bit integers}
\label{appendix:scalar32}

The following C++ code computes the 96-bit integer representing the sum of 128~products between pairs of 32-bit integers using AVX2 intrinsics ($m=128, n=32$). See \S~\ref{sec:sseopt} for an analysis.
For the description of the intrinsics, we refer the reader to Intel's documentation~\cite{intelintrin}.

\lstset{escapechar=@,style=customc}
\label{fig:codesample32}
\begin{lstlisting}
// input: two arrays of 32-bit integers
// const uint32_t* coeff;
// const uint32_t* x;

// output parameters:
uint64_t low_bits;
uint32_t high_bits;

__m256i ctr0, ctr1;
__m256i a, data, product, temp;
uint64_t temp_fin;

// Set accumulators to zero
ctr0 = _mm256_setzero_si256 ();
ctr1 = _mm256_setzero_si256 ();

// process the loop (unrolling may help)
for ( int i=0; i<128; i+=8 )
{
  // Load 256-bit value (eight ints)
  a = _mm256_loadu_si256
      ((__m256i *)(coeff+i));
  data = _mm256_loadu_si256
      ((__m256i *)(x+i));
  // multiply ints at even positions
  product = _mm256_mul_epu32 ( data, a);
  temp = _mm256_srli_epi64
      ( product, 32 );
  ctr1 = _mm256_add_epi64
      ( ctr1, temp );
  ctr0 = _mm256_add_epi64
      ( ctr0, product );
  // exchange even-odd
  // note: 0xb1 =  1*1+0*4+3*16+2*64
  a = _mm256_shuffle_epi32
      ( a, 0xb1);
  data = _mm256_shuffle_epi32
      ( data, 0xb1 );
  // multiply ints at even positions
  // (former odd positions)
  product = _mm256_mul_epu32 ( data, a);
  temp = _mm256_srli_epi64
      ( product, 32 );
  ctr1 = _mm256_add_epi64
      ( ctr1, temp );
  ctr0 = _mm256_add_epi64
      ( ctr0, product );
}

// finalize

// desired results are in c0 and c1
// we interleave the sums and add them
temp = _mm256_unpackhi_epi64
      ( ctr0, ctr1 );
data = _mm256_unpacklo_epi64
      ( ctr0, ctr1 );
ctr1 = _mm256_add_epi64
      ( data, temp );
// extract  a 64+32 bit number
// (low_bits, high_bits)
uint64_t lo = *(uint64_t*)(&ctr1) +
            ((uint64_t*)(&ctr1))[2];
uint64_t hi = ((uint64_t*)(&ctr1))[1] +
            ((uint64_t*)(&ctr1))[3];
uint32_t lohi = lo >> 32;
uint32_t hilo = hi;
uint32_t diff = lohi - hilo;
hi += diff;
lo = (uint32_t)lo +
      (((uint64_t)(uint32_t)hi)<<32);

// answer:
low_bits = lo;
high_bits = hi >> 32;

\end{lstlisting}

\section{Code sample to sum 128-bit products of 64-bit integers (assembler)}
\label{appendix:scalar64asm}

The following assembly code computes the 192-bit integer representing the sum of 128~products between pairs of 64-bit integers using the standard x64 instruction set ($m=128, n=64$).
\lstset{escapechar=@,style=customc}

\begin{lstlisting}
// input: pointers to 64-bit arrays
// rbx: address of the start of
//      the 1st array
// rcx: address of the start of
//      the 2nd array
// 1st accumulator:
// r10: least significant 64 bits
// r11: mid 64 bits
// r12: most significant 64 bits
// 2nd accumulator:
// r13: least significant 64 bits
// r14: mid 64 bits
// r15: most significant 64 bits

// add 1st product to 1st accumulator
movq 0(%rbx),%%rax\n
mulq 0(%rcx)\n
addq %%rax,  %%r10\n
adcq %%rdx,  %%r11\n
adcq $0,  %%r12\n

// add 1st product to 1st accumulator
movq 8(%rbx),%%rax\n
mulq 8(%rcx)\n
addq %%rax,  %%r13\n
adcq %%rdx,  %%r14\n
adcq $0,  %%r15\n

// ... repeat as necessary

// merge accumulators:
movq 8(%rbx),%%rax\n
mulq 8(%rcx)\n
addq %%rax,  %%r13\n
adcq %%rdx,  %%r14\n
adcq $0,  %%r15\n

// the sum of products is now at
// (r10, r11, r12)

\end{lstlisting}

\section{Code sample to sum 128-bit products of 64-bit integers using Intel intrinsics}
\label{appendix:scalar64cpp}

The following C++ code computes the 192-bit integer representing the sum of 128~products between pairs of 64-bit integers using Intel intrinsics ($m=128, n=64$). Such code is well suited for the Microsoft Visual C++ compiler.
\lstset{escapechar=@,style=customc}

\begin{lstlisting}
// 1st accumulator:
uint64_t low1   = 0; // least sign. 64 bits
uint64_t high1  = 0; // next 64 bits
uint64_t vhigh1 = 0; // most significant
// 2nd accumulator
uint64_t low2   = 0; // least sign. 64 bits
uint64_t high2  = 0; // next 64 bits
uint64_t vhigh2 = 0; // most significant
// intermediates:
uint64_t mulLo, mulHi;
unsigned char c;
for(size_t i = 0; i<128; i+=2)
{
  // process even pair
  // _umul128 is Microsoft-specific
  mulLo = _umul128(a[i],s[i],&mulHi);
  // _addcarry_u64 is an Intel intrinsic
  // supported by Microsoft
  c = _addcarry_u64
     (0, mulLo, low1, &low1);
  c = _addcarry_u64
     (c, mulHi, high1, &high1);
  _addcarry_u64(c, vhigh1, 0, &vhigh1);

  // process odd pair
  mulLo = _umul128
         (a[i+1],s[i+1],&mulHi);
  c = _addcarry_u64
         (0, mulLo, low2, &low2);
  c = _addcarry_u64
         (c, mulHi, high2, &high2);
  _addcarry_u64(c, vhigh2, 0, &vhigh2);
}

c = _addcarry_u64(0, low1, low2, &low1);
c = _addcarry_u64
    (c, high1, high2, &high1);
_addcarry_u64
    (c, vhigh1, vhigh2, &vhigh1);

// result is at (low1, high1, vhigh1)


\end{lstlisting}

\section{Non-Regularity of the VHASH family}
\label{appendix:related}

The effort to design practical universal random hash functions with good properties has a long history. Thorup~\cite{338597} showed that strongly universal hashing could be very fast.
Crosby and Wallach~\cite{Crosby:2003:DSV:1251353.1251356} showed that almost universal hashing could be as fast as common deterministic hash functions. Their conclusion was that while universal hash functions were not standard practice, they  should be.
In particular, they got good experimental results with UMAC~\cite{703969}.


More recently, Krovetz proposed the VHASH family~\cite{krovetz2007message}. On 64-bit processors, it is
faster than the hash functions from  UMAC.

Like UMAC, VHASH is $\epsilon$-almost $\Delta$-universal and builds on the NH family:
\begin{align*}
\mathrm{NH}(s)=\Bigg ( \sum_{i=1}^{l/2}\big (&((s_{2i-1}+k_{2i-1}) \bmod\, 2^n) \\
&\times ((s_{2i}+k_{2i}) \bmod \, 2^n)\big ) \Bigg ) \bmod \, 2^{2n}.
\end{align*}
NH is fast in part due to the fact that it uses one multiplication per pair of input words. In contrast, MMH or
\treename{}, as  derivatives of \textsc{Multilinear}, use at least one multiplication per input word. However, the number of multiplications is not necessarily a performance bottleneck: recent Intel processors can execute one multiplication per cycle. We should not expect hash functions with half the number of multiplications to be twice as fast~\cite{Lemire10072013}. For example, a fast hash function might be limited by the number of micro-operations that the processor can retire per cycle (4 on recent Intel processors) rather than by the number of multiplications.

Like \textsc{Multilinear}, NH is $1/2^{n}$-almost $\Delta$-universal, but \textsc{Multilinear} generates values in $[0,2^n)$ whereas NH generates values in $[0,2^{2n})$. (For this reason, NH might not be well suited for a tree-based approach as described in \S~\ref{sec:pyr}.)

Krovetz  reports that VHASH is twice as fast as UMAC (0.5~CPU cycle per input byte vs.\ 1~CPU cycle per input byte on an AMD Athlon processor).
For long strings on 64-bit processors, we expect VHASH to be one of the fastest universal hash families.

The updated VHASH~\cite{dai2007vhash} family is $\epsilon$-almost universal over  $[0,2^{64}-257)$ with $\epsilon= \frac{1}{2^{61}}$ for strings of length up to $2^{62}$~bits.
 In contrast, \treename{} produces hash values in  $[0,2^{64})$ with $\epsilon =\frac{8}{2^{63}-6}$ for strings of length up to   $(2^{62}-64)$~bits.

We can describe the 64-bit VHASH as follows:  NH is used with $n=64$ to generate 128-bit hash values on 128-byte blocks. The result is $1/2^{64}$-almost $\Delta$-universal on each block. In turn, the result is mappend to the interval $[0,2^{126})$ by applying a  modulo reduction ($\bmod\,2^{126}$): the family is then $1/2^{62}$-almost $\Delta$-universal on each block. The hashed values over each block are then aggregated using a polynomial hash family computed over $[0,2^{127}-1)$. The result is finally reduced to $[0,2^{64}-257)$ with modulo operations and divisions.

The NH family is not regular.
For instance, consider values of $s$ where $s=(s_1, s_2)$ and at least one of $s_1$ and $s_2$ is even, which is $\frac{3}{4}$ of all possible values. If both $k_1$ and $k_2$ are even, then $\mathrm{NH}(s)$ is even, too, and, therefore, $\frac{3}{4}$ of all values are mapped to only $\frac{1}{2}$ of all values.

To make matters worse, the NH family is never regular for any choice of keys ($k_i$) and it has ``very little regularity'' as $n$ grows in the following sense. For any given integers $k_1, k_2 \in [0,2^n)$, consider the map from $[0,2^n) \times [0,2^n) \to [0,2^{2n})$ given by
$\mathrm{NH}(x,y)=((x+k_{1} \bmod \, 2^n) ((y+k_{2} \bmod \, 2^n) \bmod \, 2^{2n}$.
Because we pick $x,y \in [0,2^n)$, we can choose $k_1=k_2=0$ without loss of generality. We can then ask about the size of the image of $\mathrm{NH}(x,y)$. That is, which fraction of all integers in $[0,2^{2n})$ are the product of two numbers in $[0,2^n)$? Erd\"os showed that this ratio goes to zero as $n$ becomes large~\cite{erdos}. Though we do not know of an exact formula, we plot the relative size of the image of NH in Fig.~\ref{fig:nhsize}: already at $n=20$ only about one integer out of five in $[0,2^{40})$ can be generated by the product of two integers in $[0,2^{20})$~\cite{A027417}. We expect that for $n=64$, the ratio is considerably less than 20\,\%. Note that keeping only, say, the least significant  $2n-2$~bits (e.g., applying $\mod  2^{2n-2}$) or most significant $2n-2$ bits (e.g., applying $\div 2^2$) does not change the core result: the relative size of the image still goes to zero as $n$ becomes large.

Hence NH is not even 5-regular.
The issue is more dramatic if we consider component-wise regularity. Indeed, consider $\mathrm{NH}(s)$ over 2-character strings $(s_1,s_2)$. If  $s_2+k_2 \bmod{2^{n}}=0$,
we have that $\mathrm{NH}(s)= 0$ for all values of $s_1$, which is the worst possible case. NH and VHASH are not at all component-wise regular.

\begin{figure}
\centering\includegraphics[width=0.7\columnwidth]{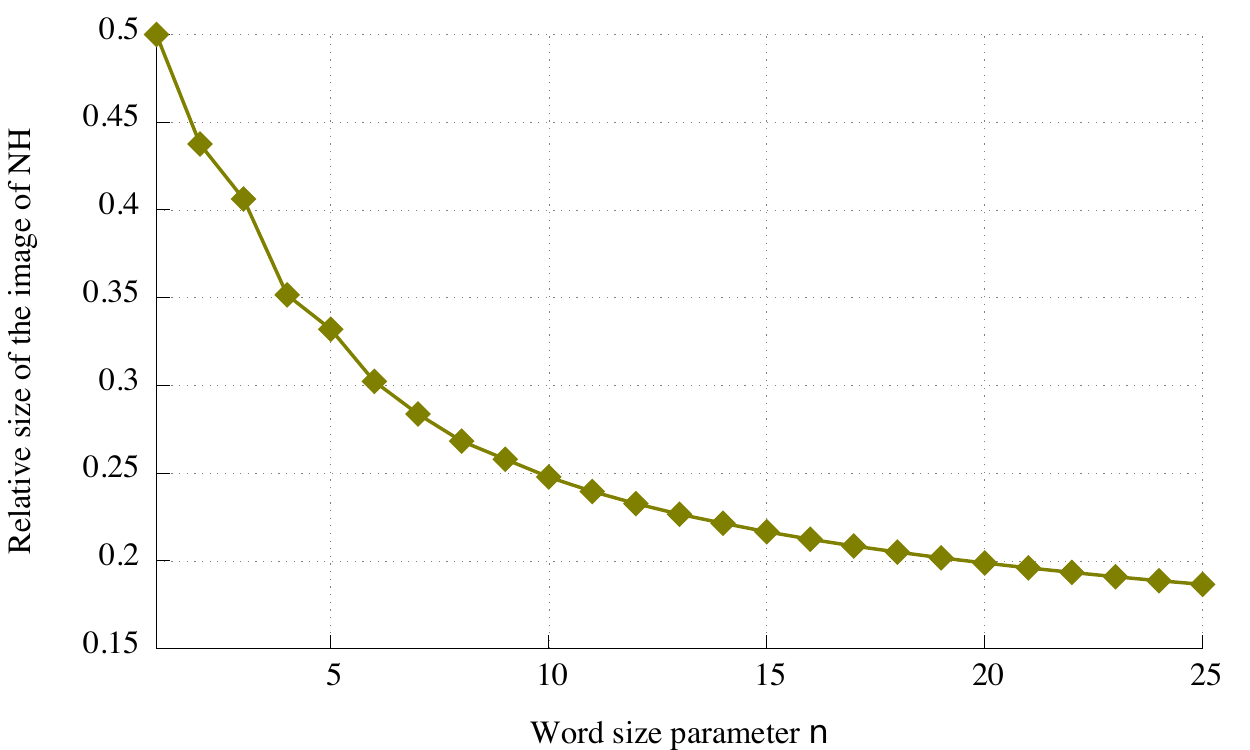}
\caption{\label{fig:nhsize}Fraction of all $2n$-bit integers that are the product of two $n$-bit integers. }
\end{figure}

\end{document}